\DeclarePairedDelimiter\ceil{\lceil}{\rceil}
\DeclarePairedDelimiter\floor{\lfloor}{\rfloor}
\definecolor{codegreen}{rgb}{0,0.6,0}
\definecolor{codegray}{rgb}{0.5,0.5,0.5}
\definecolor{codepurple}{rgb}{0.58,0,0.82}
\definecolor{backcolour}{rgb}{0.95,0.95,0.92}
\newcommand{\eguess}{::}
\newcommand{\irefine}{::}
\newcommand{\numAllBenchmarks}{77\xspace}
\newcommand{\numMoreBenchmarks}{5\xspace}
\newcommand{\numBenchmarksSolvedBySynquid}{75\xspace}
\newcommand{\share}{\curlyveedownarrow}
\lstdefinestyle{mystyle}{
	commentstyle=\color{codegreen},
	keywordstyle=\color{black},
	numberstyle=\tiny\color{codegray},
	stringstyle=\color{codepurple},
	basicstyle=\ttfamily\footnotesize,
	breakatwhitespace=false,         
	breaklines=true,                 
	captionpos=b,                    
	keepspaces=true,                 
	numbers=left,                    
	numbersep=5pt,                  
	showspaces=false,                
	showstringspaces=false,
	showtabs=false,                  
	tabsize=2
}
\newenvironment{mybox}[1][gray!20]{
	\begin{tcolorbox}[   
		breakable,
		left=0pt,
		right=0pt,
		top=0pt,
		bottom=-1pt,
		colback=#1,
		colframe=#1,
		width=1\dimexpr\textwidth\relax,
		boxsep=4pt,
		arc=0pt,outer arc=0pt,
		]
	}{
	\end{tcolorbox}
}
\newcounter{finding}
\newcommand {\john}[1]{{\color{violet}\sf{J: #1}\normalfont}}
\newcommand {\checkcolor}[1]{{\color{red}#1\normalfont}}
\definecolor{mediumelectricblue}{rgb}{0.1, 0.1, 0.69}
\newcommand {\generatecolor}[1]{{\color{mediumelectricblue}#1\normalfont}}
\newcommand{\mypar}[1]{\vspace{0.5mm}\noindent  \textbf{\textit{#1}}\quad}
\newcommand{\config}[1]{\langle #1\rangle^\#}
\newcommand{\concreteConfig}[1]{\langle #1\rangle}
\newcommand{\sem}[1]{\llbracket #1\rrbracket}
\def\fbx#1{\vbox{\hbox{\hbox{#1}\setbox0\lastbox\copy0\kern\fboxsep\vrule width\fboxrule depth\dimexpr \fboxsep+\dp0\relax}%
		\hrule height\fboxrule}}
\newcommand{\bound}{\sqsubset}
\def\name{{\normalfont\textsc{SynPlexity}}\xspace}
\def\nameAuxAlgo{{\normalfont\textsc{SynAuxRef}}\xspace}
\def\tool{\name}
\def\resyn{{\normalfont\textsc{ReSyn}}\xspace}
\def\synquid{{\normalfont\textsc{Synquid}}\xspace}
\font\btt=rm-lmtk10
\lstdefinelanguage{OCaml}{
	keywords={fix, Expression, match, with, let,Div_2, Number, in,Mult, if, then,else,ite},
	keywordstyle=\btt,
	ndkeywords={class, export, boolean, throw, implements, import, this},
	ndkeywordstyle=\color{black}\bfseries,
	identifierstyle=\color{black},
	sensitive=false,
	comment=[l]{//},
	morecomment=[s]{/*}{*/},
	commentstyle=\color{purple}\ttfamily,
	stringstyle=\color{red}\ttfamily,
	morestring=[b]',
	morestring=[b]"
}
\newcommand{\subsubsubsection}[1]{\vspace{2pt plus 1pt minus 1pt}\noindent{\bf #1}}
\newcommand{\Omit}[1]{}
\newif\iffull
\begin{document}
\title{Synthesis with Asymptotic Resource Bounds}
\author{Qinheping Hu, John Cyphert, Loris D'Antoni,   Thomas Reps}
\institute{University of Wisconsin-Madison, Madison, USA}
%
%

\maketitle              
\begin{abstract}
		We present a method for synthesizing recursive functions that
		satisfy both a functional specification and an asymptotic resource
		bound.
		Prior methods for synthesis with a resource metric require the user to
		specify a \emph{concrete} expression exactly describing resource usage, whereas our method uses big-$O$
		notation to specify the \emph{asymptotic} resource usage.
		Our method can synthesize programs with complex resource bounds,
		such as a sort function that has complexity $O(n\log(n))$.
		
		\hspace{1.5ex}
		Our synthesis procedure uses a type system that is able
		to assign an asymptotic complexity to terms, and can track recurrence relations of functions.
		These typing rules are justified by  theorems used in
		analysis of algorithms, such as the Master Theorem and the
		Akra-Bazzi method.
		We implemented our method as an extension of prior type-based
		synthesis work.		
		Our tool, \name, was able to synthesize complex divide-and-conquer programs
		that cannot be synthesized by prior solvers.
\end{abstract}

\section{Introduction}
Program synthesis is the task of automatically finding programs
that meet a given behavioral specification, such as input-output
examples or complete formal specifications.
Most of the work on program synthesis has been devoted to qualitative synthesis, i.e., finding \emph{some} correct solution.
However, programmers often want more than just a correct solution---they
may want the program that is smallest, most likely, or most
efficient.
While there are some techniques for adding a quantitative \textit{syntactic}
objective in program synthesis \cite{hu2018syntax}---e.g.,
finding a smallest solution,  or a most likely solution with
respect to some distribution---little attention has been devoted to
quantitative \textit{semantic} objectives---e.g., synthesizing a program
that has a certain asymptotic complexity. 

Recently, Knoth et al.~\cite{knoth2019resource} studied the problem of
resource-guided program
synthesis, where the goal is to synthesize programs with limited resource usage. 
Their approach, which combines refinement-type-directed
synthesis \cite{polikarpova2016program} and automatic amortized
resource analysis (AARA) \cite{hoffmann2011multivariate},
is restricted to \textit{concrete} resource bounds,
where the user must specify the \emph{exact} resource usage of
the synthesized program as a \emph{linear} expression.
This limitation has two drawbacks:
(i) the user must have insights about the coefficients to put in the
supplied bound---which means that the user has to provide details
about the complexity of code that does not yet exist;
(ii) the limitation to linear bounds means that the user cannot specify
resource bounds that involve logarithms, such as $O(\log n)$ and
$O(n \log n)$, common in problems
based  on divide and conquer.

In this paper, we introduce \tool, a type-system paired with a
type-directed synthesis technique that addresses these issues.
In \tool, the user provides as input a refinement type that describes both
the functionality and the \textit{asymptotic} (big-$O$) resource
usage of a program.
For example, a user might ask \tool to synthesize an implementation of a sorting function with resource usage $O(n\log n)$,
where $n$ is the length of the input list.
As in prior work, \tool also takes as input a set of auxiliary functions
that the synthesized program can use.
\tool then uses a type-directed synthesis algorithm to search for a program
that has the desired functionality,  and satisfies the asymptotic resource bound.
\tool's synthesis algorithm uses a new type system that can reason about the asymptotic complexity of functions.
To achieve this goal, this type system uses two ideas.
\begin{enumerate}
  \item
    The type system uses \textit{recurrence relations}  instead of concrete resource potentials \cite{hoffmann2011multivariate}
    to reason about the asymptotic complexity of functions.
    For example, the recurrence relation $T(\emph{u})\leq 2T(\floor{\frac{\emph{u}}{2}})+O(\emph{u})$ denotes that 
    on an input of size $\emph{u}$, the function will perform at most two recursive calls
    on inputs of size at most $\floor{\frac{\emph{u}}{2}}$, and will use at most $O(\emph{u})$ resources
    outside of the recursive calls.\footnote{
      The recurrence relation above is one possible instantiation of the Master Theorem \cite[\S4.5 and \S4.6]{Book:CLR01};
      it can also be instantiated as $T(\emph{u})\leq 2T(\ceil{\frac{\emph{u}}{2}})+O(\emph{u})$.
      The type system makes use of certain templates for instantiating the algorithm-analysis theorems
      that we use.
      The use of templates means that the type system does not use all possible instantiations,
      but all instantiations used in the type system are valid ones.
    }
    For a given recurrence relation, our type system uses refinement types to guarantee that a function typed with this 
    recurrence relation performs 
    the correct number of recursive calls on parameters of the appropriate sizes.
  \item
    These typing rules are justified by classic theorems from the
    field of analysis of algorithms, such as the Master Theorem
    \cite{Book:CLR01}, the Akra-Bazzi method \cite{COA:AK98}, or
    C-finite-sequence analysis \cite{Book:KP11}.
\end{enumerate}

Gu{\'e}neau et al.\ observed that reasoning with $O$-notation
can be tricky, and exhibited a collection of plausible-sounding,
but flawed, inductive proofs \cite[\S2]{gueneau2018fistful}.
We avoid this  pitfall via \tool's type system, which establishes
whether a term satisfies a given recurrence relation.
\tool uses theorems
that connect the form of a recurrence relation---e.g., the number of
recursive calls, and the argument sizes in the subproblems---to its
asymptotic complexity.
In particular, the \tool type system does not encode inductive proofs
of the kind that Gu{\'e}neau et al.\ show can go astray.

\tool can 
synthesize functions with complexities that cannot be handled by existing type-directed tools~\cite{polikarpova2016program,knoth2019resource},
and compares   favorably with existing tools on their benchmarks.
Furthermore, for some domains, \tool's type system allows us to discover auxiliary functions automatically
(e.g., the split function of a merge sort), instead of requiring the user to
provide them.

\vspace{1mm}\noindent\emph{Contributions.}
The contributions of our work are as follows:
\begin{itemize}
  \item
    A type system that uses refinement types to check
    whether a program satisfies a recurrence relation over a specified resource  (\sectref{typeSystem}).
  \item
    A type-directed algorithm that uses our type system to synthesize functions
    with given resource bounds (\sectref{synthesisAlgorithm}, \sectref{extension}).
  \item
    \tool, an implementation of our algorithm that, unlike prior tools, can synthesize programs with desired asymptotic complexities (\sectref{eval}).
\end{itemize}
Complete proofs and details of the type system can be found in the appendices.

\section{Overview}
\label{Se:Overview}

In this section, we illustrate the main components of our algorithm
through an example.
Consider the problem of synthesizing a function \texttt{prod}
that implements the multiplication of two natural numbers, $x$ and $y$.
We want an efficient solution whose time complexity is $O(\log x)$
with respect to the value of the first argument $x$.
In \sectref{typeSynthesis}, we show how existing type-directed
synthesizers solve this problem in the absence of a complexity-bound
constraint.
In \sectref{addBounds}, we illustrate how to specify asymptotic bounds
in type-directed synthesis problems.
In \sectref{fromBoundToRelation}, we show how the tracking of recurrence relations can be used to establish complexity bounds as well as guide the synthesis search.

\subsection{Type-Directed Synthesis}
\label{Se:typeSynthesis}

We first review one of the state-of-the-art type-directed synthesizers, \synquid, through the
aforementioned example---i.e., synthesizing a program \texttt{prod} that computes the product of two natural numbers.
In \synquid, the specification is given as a refinement type that describes the desired behavior of the synthesized function. 
We specify the behavior of $\texttt{prod}$ using the following refinement-type:
\[
  \texttt{prod}::\texttt{x:}\{\texttt{Int}~|~\emph{v}\ge0\}\to \texttt{y:}\{\texttt{Int}~|~\emph{v}\ge0\}\to\{\texttt{Int}\mid\emph{v}=\texttt{x}* \texttt{y}\}.
\]
Here the types of the inputs \texttt{x} and \texttt{y}, as well as the
return type of \texttt{prod} are refined with predicates.
The refinement $\{\texttt{Int}~|~\emph{v}\ge0\}$ declares \texttt{x} and \texttt{y} to be non-negative,
and the refinement  $\{\texttt{Int}\mid\emph{v}=\texttt{x}* \texttt{y}\}$ of the return type
declares the output value to be an integer that is equal to the product of the inputs \texttt{x} and \texttt{y}.
In addition to the specification, the synthesizer receives as input
some signatures of auxiliary functions it can use.
The specifications of auxiliary functions are also given as refinement types.
In our example, we have the following functions:
\begin{eqnarray*}
	\texttt{even}&::&\texttt{x:Int}\to\{\texttt{Bool}\mid\texttt{x mod}\ 2=0\}\qquad	\texttt{dec}::\texttt{x:Int}\to\{\texttt{Int}\mid \emph{v} = \texttt{x} - 1\}\\
	\texttt{double}&::&\texttt{x:Int}\to\{\texttt{Int}\mid \emph{v} = \texttt{x} + \texttt{x}\}\qquad\quad\hspace{0.8mm}
	\texttt{div2}::\texttt{x:Int}\to\{\texttt{Int}\mid \emph{v} = \floor{\frac{\texttt{x}}{2}}\}\\
	\texttt{plus}&::&\texttt{x:Int}\to\texttt{y:Int}\to\{\texttt{Int}\mid \emph{v}=\texttt{x} + \texttt{y}\}
\end{eqnarray*}

With the above specification and auxiliary functions, \synquid will output the implementation of \texttt{prod} shown in \eqref{prod}.
\begin{eqnarray}
\label{Eq:prod}
	\texttt{prod = }\lambda\texttt{x.}\lambda y.\ \texttt{\btt{if}} \texttt{ x==0 }\texttt{\btt{then}} \texttt{ x \btt else } \texttt{plus y (prod (dec x) y)}
\end{eqnarray}
\synquid uses a sophisticated type system to guarantee that the synthesized term has the desired type. 
Furthermore, \synquid uses its type system to prune the search space by only enumerating terms that can possibly be typed, and thus meet the specification.
Terms are enumerated in a top-down fashion, and appropriate specifications are propagated to sub-terms.   As an example, let us see how \synquid synthesizes the function body---an \texttt{if-then-else} term---in \eqref{prod}, which is of refinement type $\{\texttt{Int}~|~\emph{v}=\texttt{x}*\texttt{y}\}$. \synquid will first enumerate an integer term for the \texttt{then} branch---a variable term $\texttt{x}$. Then, with the \texttt{then} branch fixed, the condition guard must be refined by some predicate $\varphi$ under which the \texttt{then} branch (the term \texttt{x} refined by $\emph{v}=\texttt{x}$) fulfills the goal type $\{\texttt{Int}~|~\emph{v}=\texttt{x}*\texttt{y}\}$, i.e., $\forall \texttt{x},\texttt{y}\ge0.\varphi\wedge \emph{v}=\texttt{x}\implies\emph{v}=\texttt{x}*\texttt{y}$. With this constraint, \synquid identifies the term $\texttt{x}==0$ as the condition.
Finally, \synquid propagates the negation of the condition to the \texttt{else} branch---the \texttt{else} branch should be a term of type 
 	$\{\texttt{Int}~|~\emph{v}=\texttt{x}*\texttt{y}\}$
 	with the path condition $\texttt{x} \ne 0$---and enumerates the term \texttt{plus y (prod (dec x) y)} as the else branch, which has the desired type.
 
The program in \eqref{prod} is correct, but inefficient.
Let us count each call to an auxiliary function as one step;
and let $T(x)$ denote the number of steps in which the
program runs with input $x$.
The implementation in \eqref{prod} runs in $ \Theta(x)$
steps because  $T(x)$ satisfies  the recurrence
$T(x)= T(x-1)+2$, implying $T(x)\in \Theta(x)$. 
Because, \synquid does not provide a way to specify resource bounds,
such as $O(\log x)$; one cannot ask \synquid to find a more efficient implementation.   

\subsection{Adding Resource Bounds} 
\label{Se:addBounds}

In our tool, \tool, one can specify a synthesis problem with an
asymptotic resource bound, and can ask \tool to find an $O(\log x)$
implementation of \texttt{prod}.
To express this intent, the user needs to specify (1) the asymptotic
resource-usage bound the synthesized program should satisfy,
(2) the cost of each provided auxiliary function, and
(3) the size of the input to the program. 

\emph{Asymptotic Resource Bound.} We extend refinement types with resource annotations. The annotated refinement types are of the form $\langle\tau;\generatecolor{\alpha}\rangle$ where $\tau$ is a regular refinement type, and $\alpha$ is a resource annotation.
The following example asks the synthesizer to find a solution with the resource-usage bound $O(\log \emph{u})$:

\begin{eqnarray}
\texttt{prod}::\langle\texttt{x:}\{\texttt{Int}\mid\emph{v}\ge\nonumber 0\}\to\texttt{y:}\{\texttt{Int}\mid\emph{v}\ge\nonumber 0\}\to\{\texttt{Int}\mid\emph{v}=\texttt{x}* \texttt{y}\},
\generatecolor{O(\log \emph{u})}\rangle
\end{eqnarray}

\emph{Cost of Auxiliary Functions.}
The auxiliary functions supplied by  the user serve as the API
in terms of which the synthesized program is programmed.
Thus, the resource usage of the synthesized program is the sum
of the costs of all auxiliary calls made during execution.
	We allow users to assign a polynomial cost $O(\emph{u}^a)$, for some constant $a$, or a constant cost $O(1)$
	 to each auxiliary function. Here,  $\emph{u}$ is a free variable that represents the
	size of the problem on which the auxiliary function is called.

In the $\texttt{prod}$ example, all auxiliary functions are assigned constant cost, e.g., we give $\texttt{even}$ the signature
$\texttt{even}::\langle\texttt{x:Int}\to\{\texttt{Bool}\mid\texttt{x mod}\ 2=0\},\generatecolor{O(1)}\rangle$.

\emph{Size of Problems.}
The user needs to specify a size function, \texttt{size:}$\tau\to\texttt{Int}$, that maps inputs to their sizes, e.g., when synthesizing the sorting function for an input of type \texttt{list}, the size function can be $\lambda\texttt{l}.|l|$---the length of the input list. In the \texttt{prod} example, the size function is $\texttt{size}=\lambda \texttt{x}. \lambda \texttt{y}.\texttt{x}$.

\subsection{Checking Recurrence Relations}
\label{Se:fromBoundToRelation}
We extend
\synquid's
refinement-type system with resource annotations, so that the extended type system enforces the resource usage of terms. 
The idea of the type system is to check if the given function satisfies some recurrence relation. If so, it can infer that the function also satisfies the corresponding resource bound. For example, according to the Master Theorem \cite{bentley1980general}, if a function $f$ satisfies the recurrence relation $T(\emph{u})\le T(\floor{\frac{\emph{u}}{2}}) + O(1)$ where $\emph{u}$ is the size of the input, then the resource usage of $f$ is bounded by $O(\log\emph{u})$.
Checking if a function satisfies a given recurrence relation can be
performed by checking if the function contains appropriate recursive
calls---e.g., if a function contains one recursive call to a
sub-problem of half size, and consumes only a constant amount of
resources in its body, then it satisfies
$T(\emph{u})\le T(\floor{\frac{\emph{u}}{2}}) + O(1)$.

The following rule is an example of how we connect
recurrence annotations and resource bounds. 
	\begin{mathpar}
	{\inferrule*[right=\textsc{\ }]
		{			x:\tau_1,f:\tau_1\to\tau_2,\Gamma\vdash t::\langle\tau_2;(\generatecolor{[1,\floor{\frac{\emph{u}}{2}}]},\generatecolor{O(1)})\rangle}
		{ \Gamma\vdash\left(\texttt{fix}~f.~\lambda x.t\right)::\langle \tau_1\to\tau_2;\generatecolor{O(\log \emph{u})}\rangle}
	}
\end{mathpar}
The rule 
instantiates the Master Theorem example above.
Note that, the annotation  $(\generatecolor{[1,\floor{\frac{\emph{u}}{2}}]},\generatecolor{O(1)})$ 
states that the function body contains up to one recursive call to a
problem of size $\floor{\frac{\emph{u}}{2}}$, and the resource usage
in the body of $t$ (aside from calls to $f$ itself) is bounded by $\generatecolor{O(1)}$.
The rule states that if the function body $t$ of type $\tau_2$ contains one recursive call
to a sub-problem of size $\floor{\frac{\emph{u}}{2}}$,
then the function will be bounded by $\generatecolor{O(\log\emph{u})}$. 

The implementation of \texttt{prod} shown in \eqref{prod_log} runs in $O(\log x)$ steps. 
\begin{eqnarray}
\label{Eq:prod_log}
&\hspace{-40mm}\texttt{prod = }\lambda\texttt{x. }\lambda\texttt{y.\texttt{\btt if }x == 0 \texttt{\btt then }x	\texttt{\btt	else } }\\
&\hspace{-10mm}\texttt{\btt		if } \texttt{even x \texttt{\btt then } double (prod (div2 x) y) }\nonumber\\
&\hspace{22mm}\texttt{				\texttt{\btt	else } plus y (double (prod (div2 x) y))}\nonumber
\end{eqnarray}
To check that, \tool's type system counts the number of recursive calls along any path of the function. There are three paths (two nested  if-then-else terms) in the program, and at most one recursive call along each path. Also, one can check that the problem size of each recursive call is no more than $\floor{\frac{\texttt{x}}{2}}$.
For example, the recursive call $\texttt{prod (div2 x) y}$ calls to a problem with size $\texttt{div2 x}$, which is consistent with $\generatecolor{[1,\floor{\frac{\emph{u}}{2}}]}$, and $\emph{u}$ is $\texttt{x}$ because $\texttt{size}~\texttt{x}~\texttt{y} = \texttt{x}$.
In addition, the condition that the resource usage of the body is bounded by $O(1)$ is satisfied because only auxiliary functions with constant cost are called.


\section{The {\name} Type System}
\label{Se:typeSystem}
In this section, we present our type system. 
First, we give the surface language and the types, which extend the \synquid liquid-types framework with resource annotations (\sectref{syntax}). Then, we show the semantics of our language (\sectref{semantics}).
Finally, we present \name's type system  (\sectref{TypingRules}), which our synthesis algorithm uses to synthesize programs with desired resource bounds.

\subsection{Syntax and Types}
\label{Se:syntax}
\begin{figure}[t!]
	\[\hspace{-10mm}\arraycolsep=1.4pt
	\begin{array}{lrclrcl}
	\text{Term}	& t::=&& e~|~b \\
	\text{E-term}	&	e::=&& x~|~c~|~\texttt{true}~|~\texttt{false}~|~ x \, e_1 \ldots e_n \\
\hspace{-12.9mm}\text{I-term}	\left\{ \begin{array}{l}
	\text{Branching term~}\\
	\\
	\text{Function term} 
	\end{array}\right. &\begin{array}{l}
\hspace{0.6mm}	b::=\!\!\\
	\\
	f::=\!
	\end{array} &&\vert \begin{array}{l}
\texttt{\textbf{if}}~e~\texttt{\textbf{then}}~t~\texttt{\textbf{else}}~t \\
\texttt{\textbf{match}}~e~\texttt{\textbf{with}}~|_i~\texttt{C}_i~(x_i^1 \ldots x_i^n)\mapsto t_i \\
\texttt{\textbf{fix}}~\texttt{f}.\lambda x_1 \ldots \lambda x_n.t \\
	\end{array}
	\end{array}\]
	\caption{\tool syntax.}
	\label{Fi:syntax}
\end{figure}

\mypar{Syntax.}
Consider the language shown in \figref{syntax}.
In the language, we distinguish between two kinds of terms:
\textit{elimination terms} (E-terms) and
\textit{introduction terms} (I-terms).
E-terms consist of variable terms, constant values $c$, and application terms.
Condition guards and match scrutinies can only be E-terms.
I-terms are branching terms and function terms.
The key property of I-terms is that if the type of any I-term is known,
the types of its sub-terms are also known (which is not the case for E-terms).

\begin{figure}[t!]
	\[\arraycolsep=1.4pt
	\begin{array}{lrclrcl}
	\text{Logical expr.}	&\varphi,\phi, \psi::=&& x~|~\texttt{m}(\psi)~|~ \top~|~\bot~|~c~|~\psi~\texttt{mod}~\psi~|~\psi\wedge\psi~|~\psi\vee\psi \\
	&|&& ~\neg\psi~|~\psi=\psi~|~\psi*\psi~|~\psi/\psi~|~\psi+\psi~|~\psi-\psi \\
	\text{Ordinary type}&	B ::=&& \texttt{Bool}~|~\texttt{Int}~|~D \\
	\text{Refinement type} &\tau::=&&\{B~|~\varphi\}~|~x_1\!:\!\tau_1\!\to \ldots \to\!x_n\!:\!\tau_n\!\to\!y:\tau \\
	\text{Annotated type}&	\gamma ::=&&\langle\tau;\alpha\rangle \\
	\text{Recurrence ann.}&	\alpha ::= &&(\generatecolor{~[c_1,\phi_1]_{\texttt{f}}, \ldots ,[c_n,\phi_n]_{\texttt{f}};O(\psi)~}) \\
	\text{Environment}  &\Gamma::=&& \cdot~|~x:\gamma;\Gamma ~|~\varphi;\Gamma~
	|~\texttt{recFun}:=x;\Gamma~|~\texttt{args}:=x_1 \ldots x_n;\Gamma
	\end{array}\]
	\caption{\name types.}
	\label{Fi:types}
\end{figure}

\mypar{Types.}
	Our language of types, presented in \figref{types}, extends the one
of \synquid~\cite{polikarpova2016program} with \emph{recurrence annotations},
which are used to track recurrence relations on functions.
To simplify the presentation, we ignore some of the features of the type
system of \synquid~\cite{polikarpova2016program} that do not affect our algorithm.
In particular, we do not discuss polymorphic types and the enumerating strategy
that ensures that only terminating programs are synthesized.
However, our implementation is built on top of \synquid, and supports
both of those features.

Logical expressions  are built from variables, constants, 
arithmetic operators, and other user-defined logical functions.
Logical expressions in our type system can be used as refinements $\varphi$, size
expressions $\phi$, or bound expressions $\psi$.
Refinements $\varphi$ are logical predicates used to refine ordinary types in refinement
types $\{B~|~\varphi\}$. We usually use a reserved symbol $\emph{v}$ as the free variable in $\varphi$, and let $\emph{v}$ represents the inhabitants, i.e., inhabitants of the type $\{B~|~\varphi\}$ are valuations of $\emph{v}$ that satisfy $\varphi$. For example, the type $\{\texttt{Int}~|~\emph{v}~\texttt{mod}~2=0\}$ represents the even integers.
Size expressions and bound expressions are used in recurrence annotations, and are explained later.

Ordinary types includes primitive types
and user-defined algebraic datatypes $D$. 
Datatype constructors $\texttt{C}$ are functions of type $\tau_1\!\to \ldots \to\!\tau_n\to D$. For example, the datatype  \texttt{List(Int)} has two constructors: $\texttt{Cons}:\texttt{Int}\!\to\!\texttt{List(Int)}\!\to\!\texttt{List(Int)}$, and $\texttt{Nil}:\texttt{List(Int)}$.
Refinement types are ordinary types refined with some predicates $\psi$, or  arrow types.
Note that, unlike \synquid's type system, \tool's type system does not support higher-order functions---i.e., arguments of functions have to be non-arrow types. All occurrences of $\tau_i$ and $\tau$ in  arrow types $x_1\!:\!\tau_1\!\to \ldots \to\!x_n\!:\!\tau_n\!\to\!y:\tau$ have to be ordinary types or refined ordinary types.
We will discuss this limitation in \sectref{RelatedWork}.

We use \texttt{recFun} to denote the name of the function for which we are performing type-checking, and 
\texttt{args} to denote the tuple of arguments to \texttt{recFun}. For example, in the function \texttt{prod} shown in \eqref{prod}, \texttt{recFun=prod} and $\texttt{args=x~y}$
An environment $\Gamma$ is a sequence of variable bindings $x:\gamma$, path conditions $\varphi$, and assignments for variables \texttt{recFun} and \texttt{args}. 

\mypar{Recurrence Annotations.} 
Annotated types are refinement types annotated with recurrence annotations.
A recurrence annotation is a pair $(\generatecolor{[c_1,\phi_1]_{\texttt{f}}, \ldots ,[c_n,\phi_n]_{\texttt{f}};\,O(\psi)})$ consisting of
(1) a set of recursive-call costs of the form $\generatecolor{[c_i,\phi_i]_{\texttt{f}}}$, and 
(2) a resource-usage bound of the form $\generatecolor{O(\psi)}$. 
Intuitively, a recurrence annotation tracks the number $c_i$ of
recursive calls to $\texttt{f}$ of size $\phi_i$ in the first element
$\generatecolor{[c_1,\phi_1]_{\texttt{f}},  \ldots ,[c_n,\phi_n]_{\texttt{f}}}$ of the pair, as well as the asymptotic
resource usage of the \emph{body} of the function (the second element
$\generatecolor{O(\psi)}$).
Using these quantities, we can compute a recurrence relation
describing the resource usage of the function \texttt{recFun}.
For example, the recurrence annotation
$\generatecolor{([1,\emph{u}-1]_{\texttt{f}},[1,\emph{u}-2]_{\texttt{f}};O(1))}$ corresponds to
the recurrence relation $T_{\texttt{f}}(\emph{u})\le T_{\texttt{f}}(\emph{u}-1)+T_{\texttt{f}}(\emph{u}-2)+O(1)$.

A \textit{recursive-call cost} $\generatecolor{[c,\phi]_{\texttt{f}}}$ associated
with a function $\texttt{f}$ denotes that the body of $\texttt{f}$ can
contain up to $c$ recursive calls to subproblems that have sizes up to
the one specified by size expression $\phi$.
A size expression, $\phi$, is a polynomial over a reserved variable
symbol $\emph{u}$ that represents the size of the top-level problem.
In our paper,  a \emph{problem} with respect to a function
$g::x_1\!:\!\tau_1\!\to \ldots \to\!x_n\!:\!\tau_n\!\to\!y\!:\!\tau$ is
a tuple of terms $e_1 \ldots e_n$, to which $g$ can be applied---i.e., $e_i$
has type $\tau_i$ for all $i$ from $1$ to $n$.
For the problems of function $g$, the size of each problem is defined
by a \emph{size function}
$\texttt{size}_g$---a  user-defined logical function that has type
$\tau_1\!\to \ldots \to\!\tau_n\!\to\!\texttt{Int}$; i.e., it takes a problem
of $g$ as input and outputs a non-negative integer.
In the body of $g$, we say that a recursive-call term $g~e_1 \ldots e_n$
\emph{satisfies} a size expression $\phi$ if for all $x_1$, $\ldots$, $x_n$,
$\texttt{size}_g~\sem{e_1} \ldots \sem{e_n}\le[(\texttt{size}_g~x_1 \ldots x_n)/\emph{u}]\phi$,
where the $x_i$'s are the arguments of $g$ and the $\sem{e_i}$'s are the evaluations of $e_i$ on input $x_1 \ldots x_n$.
(See \sectref{semantics} for the formal definition of $\sem{\cdot}$.)
Note that one annotation can contain multiple recursive-call costs,
which allows the function to make recursive calls to sub-problems with
different sizes.
We often abbreviate $\langle \tau,\generatecolor{(O(1))}\rangle$ as $\tau$ and
omit $\texttt{f}$ in recursive-call costs if it is clear from context.

A resource bound $\generatecolor{O(\psi)}$  of a non-arrow type
specifies the  bound of the resource usage strictly within the
	top-level-function
body. 
A resource bound in a signature of an auxiliary function $f$ specifies the resource
usage of $f$.
\emph{Bound expressions} $\psi$ in $\generatecolor{O(\psi)}$ are of the form $\emph{u}^a\log^b
\emph{u}+c$ where $a$, $b$, and $c$ are all non-negative constants, and $\emph{u}$ represents the size of the top-level problem.

\begin{example}
	In the function \texttt{prod} (\eqref{prod_log}), the recursive-call term \texttt{prod (div2 x) y} satisfies the recursive-call cost $\generatecolor{[1,\floor{\frac{\emph{u}}{2}]}}$, because $\texttt{size}_\texttt{prod}=\lambda z.\lambda w.z$, 
	and
	$$\texttt{size}_\texttt{prod}~\sem{(\texttt{div2}~\texttt{x})}~\sem{\texttt{y}}=~\sem{\texttt{div2}~\texttt{x}}= \floor{\frac{\texttt{x}}{2}}=[(\texttt{size}_\texttt{prod}~\texttt{x}~\texttt{y})/\emph{u}]\floor{\frac{\emph{u}}{2}}.$$
\end{example}

\subsection{Semantics and Cost Model}
\label{Se:semantics}
We introduce the \emph{concrete}-cost semantics of our language here.
The semantics serves two goals: 
(1) it defines the evaluation of terms (i.e., how to obtain values),
which can be used to compute the sizes of problems in application
expressions, and
(2) it defines the resource usages of terms.

Besides the syntax shown in \figref{syntax}, implementations of
auxiliary functions can contain calls to a tick function
$\texttt{tick}(c,t)$, which specifies that $c$ units of a resource are
used, and the overall value is the value of $t$. 
Note that in our synthesis language, we are not actually synthesizing programs with $\texttt{tick}$ functions. We assume that $\texttt{tick}$ functions are only called in the implementations of auxiliary functions.
In the concrete-cost semantics, a configuration $\concreteConfig{t,C}$
consists of a term $t$ and a nonnegative integer $C$ denoting the
resource usage so far.
The evaluation judgment
$\concreteConfig{t,C} \hookrightarrow \concreteConfig{t',C+C_\Delta}$ states
that a term $t$ can be evaluated in one step to a term (or a value) $t'$, with
resource usage $C_\Delta$.
We write $\concreteConfig{t,C} \hookrightarrow^* \concreteConfig{t',C+C_\Delta}$ to indicate
the reduction from $t$ to $t'$ in zero or more steps.
All of the evaluation judgments are standard, and are shown in \sectref{aSemantics}.
Here we show the judgment of the tick function, where resource usage happens.
$${\inferrule*[right=\textsc{Sem-Tick }]
	{ }
	{\concreteConfig{\texttt{tick}(c,t),C} \hookrightarrow \concreteConfig{t,C+c}}
}$$ 
For a term $t$, $\sem{t}$  denotes the evaluation result of $t$, i.e., $\concreteConfig{t,\cdot} \hookrightarrow^*\concreteConfig{\sem{t},\cdot}$.
\begin{example}\label{Exa:double}
Consider the following function that doubles its input.
\[
  \texttt{fix}~\texttt{double}.\lambda \texttt{x}.
       \texttt{\textbf{if}}~\texttt{x = 0}~
               \texttt{\textbf{then}}~\texttt{0}~
               \texttt{\textbf{else}}~\texttt{tick}(\texttt{1,2 + double(x-1)}).
\]
Let $t_{\texttt{body}}$ denote the function body $   \texttt{\textbf{if}}~\texttt{x=0}~
\texttt{\textbf{then}}~\texttt{0}~
\texttt{\textbf{else}}~\texttt{tick}(\texttt{1,2+double(x-1)})$.
The result of evaluating $\texttt{double}$ on input $5$ is $10$, with resource usage $5$.
\begin{eqnarray*}
&&\concreteConfig{(\texttt{fix}~\texttt{double}.\lambda \texttt{x}.t_{\texttt{body}})\texttt{5},0}\\
&\hookrightarrow&\concreteConfig{\texttt{\textbf{if}}~\texttt{5=0}~
	\texttt{\textbf{then}}~\texttt{0}~
	\texttt{\textbf{else}}~\texttt{tick(1,2+double(4))},0}\\
&\hookrightarrow&\concreteConfig{\texttt{\textbf{if}}~\texttt{false}~
	\texttt{\textbf{then}}~\texttt{0}~
	\texttt{\textbf{else}}~\texttt{tick(1,2+ double(4))},0}\\
&\hookrightarrow&\concreteConfig{\texttt{tick(1,2+double(4))},0}
\hookrightarrow\concreteConfig{\texttt{2+double(4)},1}\\
&\hookrightarrow&\concreteConfig{\texttt{2+(}\texttt{fix}~\texttt{double}.\lambda x.t_{\texttt{body}}\texttt{)}4,1}\hookrightarrow^*\concreteConfig{\texttt{4+double(3)},2}\hookrightarrow^*\concreteConfig{\texttt{10+double(0)},5}
\\
&\hookrightarrow&\concreteConfig{\texttt{\textbf{10+(if}}~\texttt{0=0}~
	\texttt{\textbf{then}}~\texttt{0}~
	\texttt{\textbf{else}}~\texttt{tick(1,2+double(0-1)}\texttt{))},5}\\
&\hookrightarrow&\concreteConfig{\texttt{\textbf{10+(if}}~\texttt{true}~
	\texttt{\textbf{then}}~\texttt{0}~
	\texttt{\textbf{else}}~\texttt{tick(1,2+double(0-1)}\texttt{))},0}
\hookrightarrow
\concreteConfig{\texttt{10+0},5}
\end{eqnarray*}
\end{example}

With the standard concrete semantics, the complexity of a function $f$
is characterized by its resource usage when the function is evaluated
on inputs of a given size.
\begin{definition}[Complexity]\label{De:complexity}
  Given a function $\texttt{fix}~f.\lambda \overline{y}.t$ of type $:\tau_1\to\tau_2$,
  with size function $\texttt{size}_f:\tau_1\to\mathbb{N}$, and
  suppose that for any possible input $\overline{x}$, the configuration
  $\concreteConfig{(\texttt{fix}~f.\lambda \overline{y}.t)\overline{x},0}$
  can be reduced to $\concreteConfig{v, C_{\overline{x}}}$ for some value $v$.
  Then, if $T_f:\mathbb{N}\to\mathbb{N}$ is a function such that,
  $
    \textit{for all,}~\emph{u}\ge0,~T_f(\emph{u}) = \sup_{\overline{x}~\textit{s.t.}~\texttt{size}_f(\overline{x}) = \emph{u}}C_{\overline{x}},
 $
  we say that $T_f$ is \textbf{\emph{the complexity function of $f$}}.
\end{definition}
Note that \defref{complexity} assumes that the top-level term
$(\texttt{fix}~f.\lambda \overline{y}.t)\overline{x}$ can be reduced to some value.
Thus, \defref{complexity} only applies to terminating programs. 
\begin{definition}[Big-O notation]\label{De:bigO}
	Given two integer functions $f$ and $g$, we say that $f$ dominates $g$, i.e., $g\in O(f)$, if
	$\exists c,M\ge 0.\  \forall x\ge c.\ g(x)\le Mf(x).$
\end{definition}
In the rest of the paper, we use $T_f$ to denote the complexity function of the function $f$, and we say the complexity of $f$ is \emph{bounded} by a function $g$ if $T_f\in O(g)$.
As an example, the complexity of the \texttt{double} function shown in \exref{double} is $T_{\texttt{double}}(\emph{u}):=\emph{u}$, and hence $T_{\texttt{double}}(\emph{u})\in O(\emph{u})$.

\mypar{Auxiliary functions.}
We allow users to supply signatures for auxiliary functions, instead of implementations.
It is an obligation on users that such signatures be sensible;
in particular, when the user gives the signature
$\langle\tau_1\!\to\!\{B~|~\varphi(\emph{v},\overline{y})\},\generatecolor{O(\psi(\emph{u}))}\rangle$
for auxiliary function $f$, the user asserts that there exists some implementation
$\texttt{fix}~f.\lambda \overline{y}.t$ of $f$, such that: 1)for any input $\overline{x}$, the output of $f$ on $\overline{x}$ satisfies $\varphi$, i.e., $\varphi(\sem{(\texttt{fix}~f.\lambda \overline{y}.t)\overline{x}},\overline{x})$ is valid; and 2)for any input $\overline{x}$, the complexity  of $f$ is bounded by $\psi(\emph{u})$, i.e., $T_f(\emph{u})\in O(\psi(\emph{u}))$.
Signatures always over-approximate their implementations, as
illustrated by the following example.
\begin{example}
The signature
$  \texttt{doubleRelaxed}::\langle \texttt{x:Int}\to\{\texttt{Int}~|~\emph{v} \leq 3*x\},\generatecolor{O(\emph{u}^2)}\rangle$
describes an auxiliary function that computes \textit{no more} than the input times 3, and has quadratic resource usage.
Note that the function $\texttt{double}$ shown in \exref{double} can be an implementation of this signature
because $\sem{\texttt{double}(\overline{x})}=2*x \leq 3*x$, and the complexity function $T_{\texttt{double}}(\emph{u})=\emph{u}$
is in $O(\emph{u}^2)$. 
\end{example}

\subsection{Typing Rules}
\label{Se:TypingRules}
The typing rules of \name are inspired by bidirectional type checking
\cite{pierce2000local} and type checking with cost sharing \cite{knoth2019resource}. 
Recall that we use \texttt{recFun} to denote the name of the function for which we are performing type-checking, and 
\texttt{args} to denote the tuple of arguments to \texttt{recFun}.

An environment $\Gamma$ is a sequence of variable bindings of the form
$x:\gamma$, path conditions $\varphi$, and assignments of the form
$x=\varphi$ for \texttt{recFun} and the components of \texttt{args}.
\name's typing rules use three judgments: 1)
$	\Gamma\vdash t\irefine \gamma\text{ states that }t\text{ has type }\gamma$, 2)
$	\Gamma\vdash \gamma_1<:\gamma_2 \text{ states that }\gamma_2\text{ is a subtype of }\gamma_1$, and 3)
	$\Gamma\vdash \gamma\share\gamma_1|\gamma_2\text{ states that }\gamma_1\text{ and }\gamma_2\text{ share the costs in }\gamma $

\mypar{Subtyping.} 
	Subtyping judgments are shown in \figref{subtype} in \appref{aSemantics}. The \textsc{<:-Fun}, \textsc{<:-Sc}, and \textsc{<:-Refl} 
	are standard subtyping rules for refinement types. The remaining rules allow us to compare resource consumption of recurrence annotations. For example, if one branch of some branching term has type $\langle\tau,(\generatecolor{[1,\floor{\frac{\emph{u}}{3}}],O(\psi)})\rangle$, it can be over-approximated by a super type $\langle\tau,(\generatecolor{[1,\floor{\frac{\emph{u}}{2}}],O(\psi)})\rangle$. The idea is that the resource usage of an application calling to a problem of size $\generatecolor{\floor{\frac{\emph{u}}{2}}}$, will be larger than the application calling to a smaller problem of size $\generatecolor{\floor{\frac{\emph{u}}{3}}}$ (assuming all resource usages are monotonic). 
	
	Subtyping rules also allow the type system to compare branches with a different number of recursive calls. For example, base cases of recursive procedures have no recursive calls, and thus have types of the form $\langle \tau, (\generatecolor{[], O(\psi)})\rangle$. 
	With subtyping, these types can be over-approximated by types of the form $\langle \tau, (\generatecolor{[c, \phi], O(\psi)})\rangle$.

\mypar{Cost sharing.} 
When a term has more than one sub-term in the same path, e.g., the condition guard and the \texttt{then} branch are in the same path in an \texttt{ite} term, the recursive-call costs of the term will be shared into its sub-terms. 
The sharing operator $\alpha\share\alpha_1|\alpha_2$ partitions the recursive-call costs of $\alpha$ into $\alpha_1$  and $\alpha_2$---i.e., the sum of the costs in $\alpha_1$ and $\alpha_2$ equals the costs in $\alpha$. Sharing rules are shown in \figref{share}.   The idea is that a single cost $c$ can be shared to two costs $c_1$ and $c_2$ such that their sum is no more than $c$. An annotation can be shared to two parts if every recursive cost $\generatecolor{[c_i,\phi_i]}$ in it can be shared to two parts $\generatecolor{[c_i^{1},\phi_1]}$ and $\generatecolor{[c_i^{2},\phi_2]}$. Finally, annotations can also be shared to more than two parts.
\begin{example}
	There are multiple ways to share the recurrence annotation $\generatecolor{([1,\floor{\frac{\emph{u}}{2}}],[1,\ceil{\frac{\emph{u}}{2}}];O(\emph{u}))}$:
	$$\Gamma\vdash\generatecolor{([1,\floor{\frac{\emph{u}}{2}}],[1,\ceil{\frac{\emph{u}}{2}}];O(\emph{u}))}\share\generatecolor{([1,\floor{\frac{\emph{u}}{2}}],[1,\ceil{\frac{\emph{u}}{2}}];O(\emph{u}))}~|~\generatecolor{([\ ],O(\emph{u}))},$$
	where one annotation contains both recursive-call costs $\generatecolor{[1,\floor{\frac{\emph{u}}{2}}],[1,\floor{\frac{\emph{u}}{2}}]}$; and the other contains no recursive-call cost.
	And
	$$\Gamma\vdash\generatecolor{([1,\floor{\frac{\emph{u}}{2}}],[1,\ceil{\frac{\emph{u}}{2}}];O(\emph{u}))}\share\generatecolor{([1,\floor{\frac{\emph{u}}{2}}];O(\emph{u}))}~|~\generatecolor{([1,\ceil{\frac{\emph{u}}{2}}];O(\emph{u}))},$$
	where each annotation contains one recursive-call cost.
\end{example}

\mypar{Function terms.}
The rule \textsc{T-Abs} shown below is really a rule-schema that is parameterized in terms of an annotation \generatecolor{(A)} for a function body $t$, and a resource bound $\generatecolor{(B)}$ for the function term. If the function body $t$ has some recurrence relation  described by the annotation $\generatecolor{A}$, then the function $\texttt{f}$ will satisfy the resource-usage bound $\generatecolor{B}$.  
Some example patterns are shown in \tableref{pattern}.\footnote{
	The patterns shown in \tableref{pattern} are those we used in the implementation. Patterns capturing other recurrence relations can be added to the type system if needed. }
\[	{\inferrule*[right=\textsc{T-Abs}]
	{\Gamma'=[\texttt{recFun}\gets \texttt{f}][\texttt{args}\gets x_1 \ldots x_n]\Gamma\\ 			
		\gamma_f=\langle {x}_1:\tau_{1}\to \ldots \to x_n:\tau_n\to\tau,\generatecolor{(B)}\rangle\\	\Gamma';x_1\!:\!\langle\tau_1,\generatecolor{O(1)}\rangle; \ldots; x_n\!:\!\langle\tau_n,\generatecolor{O(1)}\rangle;\texttt{f}:\gamma_\texttt{f}\vdash t\irefine\langle\tau,\generatecolor{(A)} \rangle
	}
	{ \Gamma\vdash \texttt{fix}~\texttt{f}.\lambda x_1 \ldots \lambda x_n.t\irefine \langle {x}_1:\tau_{1}\to \ldots \to x_n:\tau_n\to\tau,\generatecolor{(B)}\rangle}
}\]
For example, if the  annotation of the function body is $\generatecolor{([1,\floor{\frac{\emph{u}}{2}}];O(1))}$, then the resource bound in the function type will be $\generatecolor{O(\log \emph{u})}$, i.e., the resource usage of $\texttt{f}$ is bounded by $\generatecolor{O(\log (\texttt{size}_\texttt{f}~x_1 \ldots x_n))}$.

At the same time, the rule stores the name $\texttt{f}$ of the recursive
function into \texttt{recFun}, and its arguments as a tuple into
$\texttt{args}$.
 \begin{table*}[t]
 	\caption{Annotations that can be used  to instantiate the rule \textsc{T-Abs}. }
 	\begin{tabular}{l|l|l|l}
 		&  $Bound~\generatecolor{(B)}$ & Recurrence relation & $Annotation~\generatecolor{(A)}$  \\
 		\hline
 		Master Theorem		& $O(\log\emph{u})$  & $T(\emph{u})\le T(\floor{\frac{\emph{u}}{d}})+O(1),~d\ge2$ & $\generatecolor{([1,\floor{\frac{\emph{u}}{d}}];O(1))}$,~$d\ge2$  \\
 		& $O(\emph{u}\log \emph{u})$ & $T(\emph{u})\le dT(\floor{\frac{\emph{u}}{d}})+O(\emph{u}),~d\ge2$ & $\generatecolor{([d,\floor{\frac{\emph{u}}{d}}];O(\emph{u}))}$,~$d\ge2$ \\
 		Akra–Bazzi 		& $O(\emph{u}\log \emph{u})$ & $T(\emph{u})\le T(\ceil{\frac{\emph{u}}{2}})+T(\floor{\frac{\emph{u}}{2}})+O(\emph{u})$ & $\generatecolor{([1,\ceil{\frac{\emph{u}}{2}}],[1,\floor{\frac{\emph{u}}{2}}];O(\emph{u}))}$\\
 		C-Finite Seq. & $O(\emph{u})$ &  $T(\emph{u})\le T(u-d)+O(1),~d\ge1$ & $\generatecolor{([1,\emph{u}-d];O(1))}$,~$d\ge1$\\
 		& $O(\emph{u}^2)$ &  $T(\emph{u})\le T(\emph{u}-d)+O(\emph{u}),~d\ge1$ & $\generatecolor{([1,\emph{u}-d];O(\emph{u}))}$,~$d\ge1$
 	\end{tabular}
 	\label{Ta:pattern}
 \end{table*}
\begin{example}
We use a function $\texttt{fix~bar}.\lambda x.\texttt{if}~x=1~\texttt{then}~1~\texttt{else}~1+\texttt{bar}(\texttt{div2}~x)$
to illustrate the first pattern in \tableref{pattern}.
The body of $\texttt{bar}$ has the annotated type
$(\generatecolor{[1,\floor{\frac{\emph{u}}{2}}];O(1)})$
because (i) there exists only one recursive call to a sub-problem whose
size is half of the top-level problem size $\emph{u}$, and (ii) the
resource usage inside the body is constant (with the assumption that
all auxiliary functions have constant resource usage).
This type appears in row 1, column 4 of \tableref{pattern}.
Consequently, the recurrence relation of $\texttt{bar}$ is $T(\emph{u})\le
T(\floor{\frac{\emph{u}}{2}})+O(1)$ (row 1, column 3),
where $T(\emph{u})$
is the resource usage of $\texttt{bar}$ on problems with size $\emph{u}$.
Finally, according to the Master Theorem, the resource usage of
$\texttt{bar}$ is bounded by $O(\log\emph{u})$ (row 1, column 2).
\end{example}

\mypar{Branching terms.}
In rule \textsc{T-If}, the condition has type \texttt{Bool} with refinement $\varphi_e$.  Two branches have different types---the \texttt{then} branch follows the path condition $\varphi_e$, and the refinement $\varphi$ of the branch term, while the \texttt{else} branch follows the path condition $\neg\varphi_e$.
By having both branches share the same recurrence annotation,
\textsc{T-If} can introduce some imprecision.
In particular, if the branches belong to different complexity classes,
the annotation of the conditional term will be the upper bound of both
branches.
\[{\inferrule*[right=\textsc{T-If}]
	{\Gamma\vdash \alpha\share\alpha_1|\alpha_2\quad  \Gamma \vdash e\eguess \langle\{\texttt{Bool}~|~\varphi_e\},\generatecolor{\alpha_1}\rangle\\
		\Gamma,\varphi_e \vdash t_1\irefine \langle\{B~|~\varphi\},\generatecolor{\alpha_2}\rangle \quad\Gamma,\neg\varphi_e\vdash t_2\irefine \langle\{B~|~\varphi\},\generatecolor{\alpha_2}\rangle
	}
	{ \Gamma\vdash \texttt{\textbf{if}}~e~\texttt{\textbf{then}}~t_1~\texttt{\textbf{else}}~t_2\irefine\langle\{B~|~\varphi\},\generatecolor{\alpha}\rangle}}\]
\vspace{-4mm}

The  rule \textsc{T-Match} (\sectref{a3Typing}) is slightly different:
(1) there can be more than two branches,
(2) all branches have the same type $\langle \tau,\alpha_2\rangle$, and
(3) variables in each case $\texttt{C}_i~(x_i^1 \ldots x_i^n)$ are introduced in the corresponding branch.

\mypar{E-terms.}
The typing rules for E-terms are shown in \figref{rulesE}.
The two rules for application terms are the key rules of our type system.
Let us first look at the \textsc{E-RecApp} rule for recursive-call terms. Recall that the recursive-call annotation tracks the number of recursive calls and the sizes of sub-problems.
If the term $\texttt{f}~e_1 \ldots e_n$ is a recursive call---i.e., $\Gamma(\texttt{recFun})=\texttt{f}$---the number of recursive calls in one of the recursive-call costs will increase by one---i.e., $\generatecolor{[c_k,\phi_k]}$ in the premise becomes $\generatecolor{[c_k+1,\phi_k]}$ in the conclusion.
Also, we want to make sure that the size of the subproblem this application term is called on  satisfies the size expression $\generatecolor{\phi_k}$. If  each callee term is refined by the predicate $\varphi_i$, i.e., $\Gamma\vdash e_i\eguess \langle \{B_i~|~\varphi_i\},\generatecolor{\alpha_i}\rangle$ , then the fact that the size of the problem $e_1 \ldots e_n$ satisfies $\phi_k$ can be implied by the validity of the predicate $\bigwedge_{i=1}^m[y_i/\emph{v}]\varphi_i\!\Rightarrow\! (\texttt{size}~y_1 \ldots y_m\le[\texttt{size}~\Gamma(\texttt{args})/\emph{u}]\phi_k)$. 
We introduce validity checking, written $\Gamma\models\varphi$ , to state that a predicate expression $\varphi$ is always true under any instance of the environment $\Gamma$. 

\begin{figure*}[t!]
\begin{mathpar}
 			
 			\boxed{\Gamma\vdash e\eguess\gamma}\hspace{10mm}{\inferrule*[right=\textsc{E-SubType}]
 				{\Gamma\vdash e\eguess\gamma'\qquad \Gamma\vdash\gamma'<:\gamma}
 				{ \Gamma\vdash e\eguess\gamma}
 			}
 		\hspace{10mm}
 		{\inferrule*[right=\textsc{E-Var}]
 			{\Gamma(x)=\gamma}
 			{ \Gamma\vdash x\eguess\gamma}
 		}\\\\
 	{\inferrule*[right=\textsc{E-App}]
 		{\Gamma\vdash \texttt{g}: \langle x_1\!:\!\tau_1\!\to \ldots \to\!
 			x_m\!:\!\tau_m\to\{B~|~\varphi\}, \generatecolor{(O(\psi_{\texttt{g}}))}\rangle\\ \Gamma(\texttt{recFun})\ne\texttt{g}\\ \Gamma\vdash \generatecolor{([c_1,\phi_1], \ldots , \ldots ,[c_n,\phi_n];O(\psi))}\share \alpha_1| \ldots |\alpha_m \\
 			\forall 1\le i\le m\qquad	\Gamma\vdash e_i\eguess \langle \{B_i~|~\varphi_i\}, \generatecolor{\alpha_i}\rangle\qquad \Gamma\vdash \{B_i~|~\varphi_i\}<:\tau_i\\
 			\Gamma\models\bigwedge_{i=1}^m[y_i/\emph{v}]\varphi_i\!\Rightarrow\! \left([\texttt{size}_{\texttt{g}}~y_1 \ldots y_m/\emph{u}]\psi_{\texttt{g}}\in O([\texttt{size}~\Gamma(\texttt{args})/\emph{u}]\psi)\right)\\
 			\tau = \{B~|~[z_i/x_i]\varphi \land \bigwedge_{i=1} [z_i/v]\varphi_i\}~z_i\notin FV(\varphi), z_i\notin FV(\varphi_i)
 		}
 		{ \Gamma\vdash \texttt{g}e_1 \ldots e_m\eguess\langle\tau,\generatecolor{([c_1,\phi_1], \ldots ,[c_n,\phi_n];O(\psi))}\rangle}
 	}\\\\
 {\inferrule*[right=\textsc{E-RecApp}]
 	{\Gamma\vdash \texttt{f}: \langle x_1\!:\!\tau_1\!\to \ldots \to\!
 		x_m\!:\!\tau_m\to\{B~|~\varphi\}, \generatecolor{\alpha}\rangle\qquad \Gamma(\texttt{recFun})=\texttt{f}\\ \Gamma\vdash \generatecolor{([c_1,\phi_1], \ldots ,[c_k,\phi_k], \ldots ,[c_n,\phi_n];O(\psi))}\share \alpha_1| \ldots |\alpha_m \\
 		\forall 1\le i\le m\qquad	\Gamma\vdash e_i\eguess \langle \{B_i~|~\varphi_i\}, \generatecolor{\alpha_i}\rangle\qquad \Gamma\vdash \{B_i~|~\varphi_i\}<:\tau_i\\
 		\Gamma\models\bigwedge_{i=1}^m[y_i/\emph{v}]\varphi_i\!\Rightarrow\! (\texttt{size}~y_1 \ldots y_m\le[\texttt{size}~\Gamma(\texttt{args})/\emph{u}]\phi_k)\\
 		\tau = \{B~|~[z_i/x_i]\varphi \land \bigwedge_{i=1} [z_i/v]\varphi_i\}~z_i\notin FV(\varphi), z_i\notin FV(\varphi_i)
 	}
 	{ \Gamma\vdash \texttt{f}e_1 \ldots e_m\eguess\langle\tau,\generatecolor{([c_1,\phi_1], \ldots [c_k+1,\phi_k],\ldots,[c_n,\phi_n];O(\psi))}\rangle}
 }
 	\end{mathpar}
 \vspace{-5mm}
 	\caption{Typing rules of E-terms}
 	\label{Fi:rulesE}
 \end{figure*}

\begin{example}
Recall \eqref{prod_log}. According to the rule \textsc{T-RecApp}, the recursive call $\texttt{prod (div2 x) y}$ has type $\langle\{\texttt{Int}~|~\emph{v}=\floor{\frac{\texttt{x}}{2}}*\texttt{y}\},\generatecolor{([1,\frac{\emph{u}}{2}]);O(1)}\rangle$.
Note that the first argument $\texttt{(div2 x)}$ has type $\{\text{Int}~|~\emph{v}=\floor{\frac{\texttt{x}}{2}}\}$,
the second argument $\texttt{y}$ has type $\{\text{Int}~|~\emph{v}=\texttt{y}\}$,
the size function is $\texttt{size}_\texttt{prod}=\lambda z.\lambda w.z$,
and the arguments in the context are $\Gamma(\texttt{args})=\texttt{x}~\texttt{y}$.
Therefore, the following predicate is valid:
\begin{eqnarray*}
&&	[y_1/\emph{v}](\emph{v}=\floor{\frac{x}{2}})\wedge [y_2/\emph{v}](\emph{v}=y)\!\Rightarrow\!\texttt{size}_\texttt{prod}~y_1~y_2=[\texttt{size}_{\texttt{prod}}~\Gamma(\texttt{args}/\emph{u})]\floor{\frac{\emph{u}}{2}}\\
\Leftrightarrow&&(y_1=\floor{\frac{x}{2}})\wedge (y_2=y)\!\Rightarrow\! y_1=\floor{\frac{x}{2}}.
\end{eqnarray*}
\end{example}

The rule \textsc{E-App} states that callees have types $\tau_i$, and the resource usage does not exceed the bound $\generatecolor{O(\psi)}$ in the
annotation.
Similar to the \textsc{E-RecApp} rule, the size of the problem $\texttt{g}$ calls to is 
$[\texttt{size}_{\texttt{g}}~y_1 \ldots y_m/\emph{u}]$ with the premise $\bigwedge_{i=1}^m[y_i/\emph{v}]\varphi_i$.
The validation checking $\bigwedge_{i=1}^m[y_i/\emph{v}]\varphi_i\!\Rightarrow\! \left([\texttt{size}_{\texttt{g}}~y_1 \ldots y_m/\emph{u}]\psi_{\texttt{g}}\in O([\texttt{size}~\Gamma(\texttt{args})/\emph{u}]\psi)\right)$ in the rule states that for any instance of $\Gamma$, the size of the problem in the application term is in the big-O class $O([\texttt{size}~\Gamma(\texttt{args})/\emph{u}]\psi)$. Note that the membership of big-O classes can be encoded as an $\exists\forall$ query. The query is non-linear, and hence undecidable in general.
However, we observed in our experiments that for many benchmarks the query stays linear.
Furthermore, even when the query is non-linear,
existing SMT solvers are capable of handling many such checks in practice.

\subsection{Soundness}
We assume that the resource-usage function $\psi$ and the complexities $T$ of each function are all nonnegative and monotonic 
	integer functions---both the input and the output are integers.
We show soundness of the type system with respect to the resource model. The soundness theorem states that if we derive a bound $O(\psi)$ for a function $\texttt{f}$, then the complexity of $\texttt{f}$ is bounded by $\psi$. 

\begin{theorem}[Soundness of type checking]
	Given a function $\texttt{fix }f.\lambda x_1 \ldots \lambda x_n.t$ and an environment $\Gamma$, if $\Gamma\vdash \texttt{fix }f.\lambda x_1 \ldots \lambda x_n.t:: \langle\tau,\generatecolor{O(\psi)} \rangle$, then the complexity of $f$ is bounded by $\psi$.
\end{theorem}

 Our type system is incomplete with respect to resource usage. That is, there are functions in our programming language that are actually in a complexity class $O(p(x))$, but cannot be typed in our type system.
The main reason why our type system is incomplete is that it ignores condition guards when building recurrence relations, and over-approximates \texttt{if}-\texttt{then}-\texttt{else} terms by choosing
the largest complexity among all the paths including even unreachable ones.


\section{The \name Synthesis Algorithm}
\label{Se:synthesisAlgorithm}

In this section, we present the \name synthesis algorithm, which uses annotated types to guide the search of terms of given types. 

\subsection{Overview of the Synthesis Algorithm}

The algorithm takes as input
a goal type $\texttt{f}:\langle \tau,\generatecolor{O(\psi)}\rangle$,
an environment $\Gamma$ that includes type information of auxiliary functions,
and the \texttt{size} functions for \texttt{f} and all auxiliary functions.
The goal is to find a function term of type $\langle \tau,\generatecolor{O(\psi)}\rangle$.

The algorithm uses the rules of the \name type system to decompose goal types into
sub-goals, and then applies itself recursively on the sub-goals to
synthesize sub-terms.
Concretely, given a goal $\gamma$, the algorithm tries all the rules shown in \figrefs{rulesI}{rulesE},
where the type in the conclusion matches $\gamma$, to construct sub-goals:
for each sub-term $t$ in the conclusion, there must be a judgment $\Gamma\vdash t\irefine \gamma'$ in the premise;
thus, we construct the sub-goal $\gamma'$---the desired type of $t$.
For each I-term rule (\figref{rulesI}), the type of each sub-term is always known,
and thus a fixed set of sub-goals is generated.
For each E-term rule (\figref{rulesE}), the algorithm enumerates
E-terms up to a certain depth (the depth can be given as a parameter or 
it can automatically increase throughout the search).
If the algorithm fails to solve some sub-goal using some E-term rule,
it backtracks to an earlier choice point, and tries another rule.

Because the top-level goal is always a function type, the algorithm
always starts by applying the rule $\textsc{T-Abs}$, which matches the
resource bound $\generatecolor{O(\psi)}$ using \tableref{pattern} to
infer a possible recurrence annotation for the type of the function
body.
Also $\textsc{T-Abs}$ constructs a sub-goal type for the function
body. In the rest of this section, we assume that goals are not
function types.

\begin{changebar} \begin{algorithm}[ht]\DontPrintSemicolon
		{\small
			\SetKwInOut{Input}{Input}
			\SetKwInOut{Output}{Output}
			\SetKwInOut{Function}{Function}
			\Input{Context $\Gamma$, goal type $\gamma=\langle\{B~|~\varphi\},\generatecolor{\alpha}\rangle$, depth bound $d$}
			\For{$t\gets\textsc{EnumerateE}(\Gamma,d,B)$}{
				\lIf {$\textsc{CheckE}(t,\Gamma,\gamma)$}{\Return $t$}
			}
			\Return $\bot$
		}
		\caption{\textsc{GenerateE}$(\Gamma,\gamma,d)$}
		\label{Alg:enumerateE}
	\end{algorithm}
\mypar{Synthesizing E-Terms.} 
The algorithm for synthesizing E-terms is shown in \algref{enumerateE}.
It enumerates each E-term $t$---with depth up to $d$---that satisfies the base type $B$ in the goal $\gamma:=\langle\{B~|~\varphi\},\generatecolor{([c_1,\phi_1]..[c_n,\phi_n];O(\psi))}\rangle$ from the context $\Gamma$.
For each such E-term $t$, the algorithm checks whether $t$ satisfies the goal type with a subroutine \textsc{CheckE}, which operates as follows.

When $t$ is a variable term, \textsc{CheckE} checks the refined type of $t$ against the goal.
When $t$ is an application term, \textsc{CheckE} first checks if the total number of recursive calls in the term $t$ exceeds the bound $\sum_ic_i$, and if it does, the term $t$ is rejected.
Otherwise, \textsc{CheckE} checks the sizes of sub-problems of recursive calls in $t$.
Formally, to check if a recursive application term $f(t_1,..,t_m)$ is consistent with some $\generatecolor{[c_k,\phi_k]}$, the algorithm queries the validity of the following predicate
$$(\bigwedge_{i=1}^{m}[y_i/\emph{v}]\varphi_i\!\Rightarrow\! (\texttt{size}_f(y_1~..~y_{m})=[\texttt{size}_f(\Gamma(\texttt{args}))/\emph{v}]\phi_k)),$$
where the $y_i$'s are  fresh variables, and the $\varphi_i$'s are the refinements of terms $t_i$'s. 
If the sizes of sub-problems are not consistent with the
recursive-call costs $\generatecolor{[c_1,\phi_1]..[c_n,\phi_n]}$, the
term $t$ is rejected. 
Note that one recursive call can possibly satisfy more than one $\generatecolor{[c_k,\phi_k]}$.
The algorithm enumerates all possible matches.
Finally, \textsc{CheckE} checks the refined type of $t$ against the goal.

Checking the validity of auxiliary application terms is similar. 
\textsc{CheckE} needs to establish that the following predicate holds, which asserts that the resource usage of an auxiliary function does not exceed the bound $O(\psi)$.
\[
  \bigwedge_{i=1}^{m}[y_i/\emph{v}]\varphi_i\!\Rightarrow\! \left([\texttt{size}_{\texttt{g}}~y_1..y_m/\emph{v}]\psi_{\texttt{g}}\in O([\texttt{size}~\Gamma(\texttt{args})/\emph{v}]\psi)\right).
\]
Recall that the above query is undecidable in general, and is checked with best effort by an SMT solver in \name. 

\mypar{Synthesizing I-Terms.}
\algref{enumerateI} shows the algorithm for synthesizing I-Terms.
\textsc{GenerateI} first tries to synthesize an E-term for the goal $\gamma$ (\lineref{e-term}).

If there is no E-term that satisfies the goal, and the match bound $m$ is greater than 0, \textsc{GenerateI} chooses to apply the rule $\text{T-Match}$ (\lineseqref{matchStart}{matchEnd}). First, it enumerates candidate scrutinees $s$, which are E-terms of some data type. Then it generates \texttt{match} $patterns$ according to the type of $s$ (line (3)), updates the goal with a new recursive-call cost (line (4)), and generates case terms $t_i$ for each pattern $pattern[i]$ (lines (5)-(7)). The subroutine \textsc{UpdateCost} is used to subtract the recursive-call cost usage from the cost in $\gamma$. Finally, if all case terms are found, the algorithm constructs the corresponding \texttt{match}-term and returns it.

If there is no \texttt{match}-term  satisfying the goal, \textsc{GenerateI} applies the rule $\textsc{T-If}$  to synthesize a term of the form $\texttt{\textbf{if}}~cond~\texttt{\textbf{then}}~t_T~\texttt{\textbf{else}}~t_F$, and performs three steps to construct sub-goals for sub-terms $cond$, $t_T$, and $t_F$:
(1) it enumerates the condition guard $cond$ (line (10)) of  type \texttt{bool}; 
(2) it updates the cost in the goal $\gamma$ (line (11)); and
(3) it propagates sub-goals to the two branches $t_T$ and $t_F$ with $cond$ and $\neg cond$ as the path condition (lines (12) and (13)), respectively. Finally, if both $t_T$ and $t_F$ are found, the algorithm constructs the corresponding \texttt{if}-term and returns it as a solution (line (14)).

	\begin{algorithm}[tb]\DontPrintSemicolon
{\small
	\SetKwInOut{Input}{Input}
	\SetKwInOut{Output}{Output}
	\SetKwInOut{Function}{Function}
	\Input{Context $\Gamma$, goal type $\gamma$, depth bound $d$, match bound $m$}
	\lIf {$t\gets \textsc{GenerateE}(\Gamma,\gamma,d)$} {
		\Return t \label{Li:e-term}
	}
	\lIf {$m > 0$\label{Li:matchStart}} {\For{$s\gets\textsc{EnumerateE}(\Gamma,d,dataType)$}{
			$patterns\gets \textsc{GeneratePatterns}(\Gamma,\text{TypeOf}(s))$\;
			$\gamma' \gets \textsc{UpdateCost}(s,\gamma)$\;
			\For{$i \in [1,\textsc{Size}(patterns)]$}{$t_i\gets \textsc{GenerateI}(\textsc{UpdateContext}(\Gamma,s==patterns[i]),\gamma',d,m-1)$\;
				\lIf{$t_i==\bot$}{\Return $\bot$}}
			\Return \texttt{Match}~s~\texttt{with}~$|_{i}$~$patterns[i]\to t_i$\label{Li:matchEnd}
	}}
	
	\For{$cond\gets \textsc{EnumerateE}(\Gamma,d,\texttt{Bool})$}{$\gamma' \gets \textsc{UpdateCost}(s,\gamma)$\;
		$t_T\gets \textsc{GenerateI}(\textsc{UpdateContext}(\Gamma,cond),\gamma',d,m)$\;
		$t_F\gets \textsc{GenerateI}(\textsc{UpdateContext}(\Gamma,\neg cond),\gamma',d,m)$\;
		\lIf{$t_T\ne \bot\wedge t_F\ne \bot$}{\Return \texttt{If}~$cond$~\texttt{then}~$t_T$~\texttt{else}~$t_F$}}
	\Return $\bot$
}
\caption{\textsc{GenerateI}$(\Gamma,\gamma,d,m)$.}
\label{Alg:enumerateI}
\end{algorithm}

\mypar{Optimization.}
\algref{enumerateI} discussed above is based on bidirectional type-guided synthesis with liquid types (\synquid \cite{polikarpova2016program}). Therefore, \emph{liquid abduction} and \emph{match abduction}, two optimizations used in \synquid, can also be used in \name. These two techniques allow  one to synthesize the branches of \texttt{if}- and \texttt{match}-terms, and then use  logical abduction to infer the weakest assumption under which the branch fulfills the goal type.
\end{changebar}

\begin{example}
\label{Exa:synthesis}

\begin{figure*}[t!]
	\[
	\renewcommand{\arraycolsep}{0pt}
	\begin{array}{ll}
	&\texttt{prod=??}_1\texttt{:}\langle\texttt{x:}\{\texttt{Int}~|~\emph{v}\ge 0\}\!\to\!\texttt{y:}\{\texttt{Int}~|~\emph{v}\ge 0\}\!\to\!\{\texttt{Int}~|~\emph{v}=\texttt{x}*\texttt{y}\},\generatecolor{(O(\log \emph{u}))}\rangle\\
	\hdotsfor{2} \\
	\xrightarrow{\textsc{T-Abs}}&\texttt{prod=}\lambda\texttt{x}.\lambda\texttt{y}.\texttt{??}_2\texttt{:}\langle\{\texttt{Int}~|~\emph{v}=\texttt{x}*\texttt{y},\generatecolor{([1,\floor{\frac{\emph{u}}{2}}];O(1))}\rangle\\
	\hdotsfor{2} \\
	\xrightarrow{\textsc{T-If}}&\texttt{prod=}\lambda\texttt{x}.\lambda\texttt{y}.\texttt{\textbf{if}}~\texttt{??}_3   \quad\xleftarrow{\textsc{E-App}}\texttt{x==}0\texttt{:}\langle\{\texttt{Bool}~|~\texttt{x}=0\},\generatecolor{([0,\floor{\frac{\emph{u}}{2}}];O(1))}\rangle\\
	&\hspace{5mm}\texttt{\textbf{then}}~\texttt{??}_4\texttt{:}\langle\{\texttt{Int}~|~\emph{v}=\texttt{x}*\texttt{y}\wedge\texttt{x}=0,\generatecolor{([1,\floor{\frac{\emph{u}}{2}}];O(1))}\rangle\\
	&\hspace{18mm}\xleftarrow{\textsc{E-App}}\texttt{x}\texttt{:}\langle\{\texttt{Int}~|~\emph{v}=0\},\generatecolor{([0,\floor{\frac{\emph{u}}{2}}];O(1))}\rangle\\
	&\hspace{5mm}\texttt{\textbf{else}}~\texttt{??}_5\texttt{:}\langle\{\texttt{Int}~|~\emph{v}=\texttt{x}*\texttt{y}\wedge\texttt{x}>0,\generatecolor{([1,\floor{\frac{\emph{u}}{2}}];O(1))}\rangle\\
	\hdotsfor{2} \\
	\texttt{??}_5\xrightarrow{\textsc{T-If}}~&\texttt{\textbf{if}}~\texttt{??}_6\quad\xleftarrow{\textsc{E-App}}\texttt{even x}\texttt{:}\langle\{\texttt{Bool}~|~\texttt{x mod }2=0\},\generatecolor{([0,\floor{\frac{\emph{u}}{2}}];O(1))}\rangle\\
	&\hspace{5mm}\texttt{\textbf{then}}~\texttt{??}_7\texttt{:}\langle\{\texttt{Int}~|~\emph{v}=\texttt{x}*\texttt{y}\wedge\texttt{x mod }2=0,\generatecolor{([1,\floor{\frac{\emph{u}}{2}}];O(1))}\rangle\\
	&\hspace{18mm}\xleftarrow{\textsc{E-App}}\texttt{double (prod (div2 x) y)}\\
	&\hspace{5mm}\texttt{\textbf{else}}~\texttt{??}_9\texttt{:}\langle\{\texttt{Int}~|~\emph{v}=\texttt{x}*\texttt{y}\wedge\texttt{x mod 2}=1,\generatecolor{([1,\floor{\frac{\emph{u}}{2}}];O(1))}\rangle\\
	&\hspace{18mm}\xleftarrow{\textsc{E-App}}\texttt{plus y (double (prod (div2 x) y))}
	\end{array}\]
	\vspace{-5mm}
	\caption{Trace of the synthesis of an $O(\log x)$ implementation of \texttt{prod}.}
	\label{Fi:prod_full}
	\vspace{-3mm}
\end{figure*}
We illustrate in \figref{prod_full} how the algorithm synthesizes the $O(\log x)$ implementation of $\texttt{prod}$
presented in \eqref{prod_log}. We omit the type contexts in the example.
We will use ``$\texttt{??}$'' to denote intermediate terms being synthesized (i.e., holes in the program).
At the beginning, the type of $\texttt{??}_1$ (i.e., the term we are synthesizing) is an arrow type with resource bound $\generatecolor{O(\log\emph{u})}$ specified by the input goal. 
In this example, \name applies to the arrow type the rule
\textsc{T-Abs}, parameterized according to the first rule in
\tableref{pattern}.
This step produces the sub-problem of synthesizing the function body
$\texttt{??}_2$, whose annotation is
$\generatecolor{([1,\floor{\frac{\emph{u}}{2}}];O(1))}$---which means
that $\texttt{??}_2$ should contain at most one recursive call to
sub-problems with size $\floor{\frac{\emph{u}}{2}}$.

Next, \name chooses to fill $\texttt{??}_2$ with an \texttt{if-then-else} term (by applying the \textsc{T-If} rules) with three sub-problems: the condition guard $\texttt{??}_3$, the \texttt{then} branch $\texttt{??}_4$ and the \texttt{else} branch $\texttt{??}_5$. Note that here we share the number of recursive calls $\generatecolor{[1,\frac{\emph{u}}{2}]}$ 
as follows:  \generatecolor{0} recursive calls in the condition guard,
and \generatecolor{1} in the \texttt{then} branch and the \texttt{else} branch.
The left arrow \texttt{E-App} shows how \name enumerates terms and checks them against the goal types of sub-problems.
For example, to fill $\texttt{??}_4$, \name enumerates terms of type
$\langle\{\texttt{Int}~|~\emph{v}=\texttt{x}*\texttt{y}\wedge\texttt{x}=0,\generatecolor{([1,\frac{\emph{u}}{2}];O(1))}\rangle$,
which are restricted to contain at most one recursive call to \texttt{prod}. 
In \figref{prod_full}, \name has picked the term $\texttt{x}$ to fill $\texttt{??}_4$.
The refinement type of the variable term $\texttt{x}$ is $\{\texttt{Int}~|~\emph{v}=\texttt{x}\wedge\texttt{x}=0\}$ where $\texttt{x}=0$ is the path condition. To check that $\texttt{x}$ also satisfies the type of $\texttt{??}_4$, the algorithm needs to apply rule \textsc{E-SubType}, and check that, for any $\emph{v}$ and $\texttt{x}$, $\emph{v}=\texttt{x}\wedge\texttt{x}=0$ implies $\emph{v}=\texttt{x}*\texttt{y}\wedge\texttt{x}=0$, and $\generatecolor{[0, \floor{\frac{\emph{u}}{2}}]}$ is approximated by $\generatecolor{[1,  \floor{\frac{\emph{u}}{2}}]}$.

After applying another \textsc{T-If} rule for $\texttt{??}_5$, \name
produces three new sub-problems $\texttt{??}_6$, $\texttt{??}_7$, and
$\texttt{??}_8$.
When enumerating terms to fill $\texttt{??}_7$, \name finds an
application term \texttt{double (prod (div2 x) y)} that satisfies the
goal
$\langle\{\texttt{Int}~|~\emph{v}=\texttt{x}*\texttt{y}\wedge\texttt{x mod }2=0,\generatecolor{([1,\floor{\frac{\emph{u}}{2}}];O(1))}\rangle$.
To check that the size of the problem in the recursive call
\texttt{prod (div2 x) y} satisfies the recursive-call cost
$\generatecolor{[1,\floor{\frac{\emph{u}}{2}}]}$, the type system first checks
the refinement of the callee.
The refinement of the first argument \texttt{(div2 x)} is $\varphi_1:=\emph{v}=\floor{\frac{\texttt{x}}{2}}$.
The refinement of the second argument \texttt{y} is $\varphi_2:=\emph{v}=\texttt{y}$.
Consequently, the size of the sub-problem $\texttt{prod (div2 x) y}$ satisfies
$[1,\floor{\frac{\emph{u}}{2}}]$ because $[z/\emph{v}]\varphi_1\wedge[w/\emph{v}]\varphi_2\implies\texttt{size}~z~w=[(\texttt{size}~\texttt{x}~\texttt{y})/\emph{v}]\floor{\frac{\emph{u}}{2}}$,
which can be simplified to $z=\floor{\frac{\texttt{x}}{2}} \wedge w = \texttt{y} \implies z=\floor{\frac{\texttt{x}}{2}}$.
(Recall that the size function for \texttt{prod} is $\texttt{size}:=\lambda z.\lambda w.z$.) 
\end{example}

The algorithm is sound because it only enumerates well-typed terms.

\begin{theorem}[Soundness of the synthesis algorithm]
	Given a goal type $\langle\tau,\generatecolor{O(\psi)}\rangle$ and an environment $\Gamma$, if a term $\texttt{fix }f.\lambda x_1..\lambda x_n.t$ is synthesized by \name, then the complexity of $f$ is bounded by $\psi$.
\end{theorem}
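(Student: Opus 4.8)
The plan is to reduce this statement to the earlier soundness-of-type-checking theorem, which already guarantees that any function term derivable with annotated type $\langle\tau,\generatecolor{O(\psi)}\rangle$ has complexity bounded by $\psi$. It therefore suffices to establish the following key lemma: whenever \textsc{GenerateI}, \textsc{GenerateE}, or \textsc{CheckE} returns a term $t\ne\bot$ on a context $\Gamma$ and goal $\gamma$, the judgment $\Gamma\vdash t\irefine\gamma$ is derivable in the \name type system. Applying this lemma to the top-level invocation---which is called with goal $\langle\tau,\generatecolor{O(\psi)}\rangle$ and returns $\texttt{fix }f.\lambda x_1..\lambda x_n.t$---yields $\Gamma\vdash \texttt{fix }f.\lambda x_1..\lambda x_n.t\irefine\langle\tau,\generatecolor{O(\psi)}\rangle$, after which the theorem is an immediate consequence of type-checking soundness.

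First I would prove the key lemma by induction on the height of the (finite) recursion tree of the mutually recursive procedures, with induction hypothesis that every recursive sub-call returning a non-$\bot$ term produces a term of its respective goal type. The argument is a case analysis on the branch of the algorithm that produced the returned term, and in each case the branch is an operational inversion of exactly one typing rule. When \textsc{GenerateI} returns through its E-term branch, the result is delegated to \textsc{GenerateE}/\textsc{CheckE} and handled in the base case below. When it constructs an \texttt{if}-term, the subroutine \textsc{UpdateCost} implements the cost split $\Gamma\vdash\alpha\share\alpha_1|\alpha_2$ and \textsc{UpdateContext} adds $cond$ and $\neg cond$ as path conditions, so the three sub-goals for the guard and the two branches match precisely the premises of \textsc{T-If}; the \texttt{match} case matches \textsc{T-Match} analogously. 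The top-level \textsc{T-Abs} step installs \texttt{recFun} and \texttt{args} into the context and selects the body annotation $\generatecolor{(A)}$ from the pattern table, exactly as the rule-schema demands. In every case, composing the sub-derivations supplied by the induction hypothesis with the matching rule yields a derivation of $\Gamma\vdash t\irefine\gamma$.

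The crux of the lemma is the base case, \textsc{CheckE} applied to an application term, and this is where I expect the main difficulty to lie. I would show that \textsc{CheckE} accepts $t$ only when all premises of \textsc{E-App} (respectively \textsc{E-RecApp}) are met: the test against the recursive-call bound $\sum_i c_i$ together with the matching of each recursive application against some cost $\generatecolor{[c_k,\phi_k]}$ realizes the annotation increment and the sharing premise, and the validity queries issued by \textsc{CheckE} are syntactically the side conditions $\Gamma\models\cdots$ appearing in those rules. The delicate point is that these queries are undecidable in general and are discharged only ``with best effort'' by an SMT solver. This incompleteness does not threaten soundness, because only one direction is needed: provided the SMT backend is sound, so that it answers \emph{valid} only for genuinely valid formulas, every premise certified by \textsc{CheckE} truly holds, and the corresponding E-term judgment is derivable. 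The residual refinement-subtyping checks correspond to applications of \textsc{E-SubType} together with the subtyping rules and are justified in the same manner. With the key lemma in hand, the theorem follows in one line from type-checking soundness.
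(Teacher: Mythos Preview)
Your proposal is correct and follows essentially the same approach as the paper: reduce to the soundness-of-type-checking theorem by showing that every term returned by the synthesis procedure is well-typed at its goal, arguing by induction that E-terms are well-typed because they are enumerated-and-checked against the E-term rules, and that branching (I-)terms are well-typed because their sub-goals are constructed directly from the premises of \textsc{T-If}, \textsc{T-Match}, and \textsc{T-Abs}. The paper's own proof is a terse two-sentence sketch of exactly this argument; your version is considerably more detailed (in particular, your explicit treatment of the SMT-soundness assumption and the matching of \textsc{CheckE}'s validity queries to the side conditions of \textsc{E-App}/\textsc{E-RecApp} goes beyond what the paper spells out), but the underlying strategy is identical.
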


\begin{changebar}
	Although \name does not support higher-order functions in general, it can solve restricted but practical problems with higher-order functions. This extension is discussed in \sectref{hof}.
\end{changebar}


\section{Extensions to the \name Type System}
\label{Se:extension}
In this section, we introduce two extensions to the \name type system.

\subsubsubsection{Recurrence Relations with Correlated Sizes.} 
The type system shown in \sectref{typeSystem} only tracks sub-problems with independent sizes.
For example, consider the recurrence relation $T(\emph{u})=T(l)+T(r)+O(1)$, where
the variables $l$ and $r$ are correlated by the constraint $l+r<\emph{u}$.
This relation is needed to reason about programs that manipulate binary trees or binary heaps,
where $l$ and $r$ represent the sizes of the two children.  
To support such a recurrence relation, we extend \tool's type system with recursive-call costs 
of the form $\generatecolor{[1,l],[1,\emph{u}-1-l]}$, where $l$ is a free variable. 
When correlated recurrence relations are present, the synthesis algorithm will:
(1) match the first enumerated recursive-call term to $\generatecolor{[1,l]}$, and instantiate the size $l$ with $s$, where $s$ is the size of the recursive-call term ($s$ should be smaller than the size $\emph{u}$ of the top-level function); and
(2) use the size $s$ of the recursive-call term computed in step  1 to constrain the algorithm to
enumerate only recursive-call terms of sizes at most $\emph{u}-1-s$.
	
\subsubsubsection{Synthesis of Auxiliary Functions.}
Most of the existing type-directed approaches require the input to the problem to
contain all needed auxiliary functions.
With \tool, some of the auxiliary functions needed
to solve synthesis problems with resource annotations can be synthesized
automatically.

For example, consider the problem \texttt{prod} described in
\sectref{Overview}.
In this problem, we observe that one of the provided auxiliary
functions, $\texttt{div2}$, strongly resembles one of the elements of
the recurrence relation, $T(\emph{u})\le T(\floor{\frac{\emph{u}}{2}})+O(1)$,
needed to synthesize a program with the desired resource usage.
In particular, we know that one needs an auxiliary function that
can take an input of size $u$ and produce an output of size $\floor{\frac{\emph{u}}{2}}$.
In this example, the required auxiliary function $\texttt{div2}$
merely needs to divide the input by $2$ (and round down), but in certain
cases it might need a more precise refinement type than merely
changing the size of the input.
For example, the auxiliary function split used by merge sort needs to
split the input list $\texttt{xs}$ into two lists $\texttt{v1}$ and $\texttt{v2}$
that are half the length of the input \textit{and}
such that 
$\texttt{elems}(\texttt{v1}) \uplus \texttt{elems}(\texttt{v2})=\texttt{elems}(\texttt{xs})$.
However, all we know from the refinement is that the output lists must be half the length of the original list.

Although we do not know what this auxiliary function should do exactly,
we can use the size constraint appearing in the recurrence relation to define part of the refinement type we want the auxiliary 
function to satisfy.
\tool builds on this idea and incorporates an (optionally enabled) algorithm, \nameAuxAlgo, 
that while trying to synthesize a solution to the top-level synthesis problem
also tries in parallel to synthesize auxiliary functions that can create sub-problems with the size constraints needed in the recurrence relation.
To address the problem mentioned above---i.e., that we do not know the exact refinement type the auxiliary function
should satisfy---\nameAuxAlgo enumerates \emph{auxiliary refinements}, which are possible
specifications that the auxiliary function \texttt{aux} we are trying to synthesize might satisfy.


\section{Evaluation}
\label{Se:eval}
In this section, we evaluate the effectiveness and performance of \name, and compare it to existing tools.\footnote{All the experiments were performed on an Intel Core i7 4.00GHz CPU, with
	8GB of RAM. We used version 4.8.9 of Z3.
	The timeout for each benchmark was 10 minutes.}
	We implemented \name in Haskell on top of \synquid by extending its type system with recurrence annotations as presented in \sectref{typeSystem}. The detailed results are reported in \sectref{aEval}.

\subsection{Comparison to Prior Tools}

We compared \name against two related tools: \synquid\cite{polikarpova2016program} and \resyn\cite{knoth2019resource},
which are also based on refinement types.

\mypar{Benchmarks.}
We considered a total of \numAllBenchmarks synthesis problems:
56 synthesis problems from \resyn (each benchmark specifies a concrete linear-time resource annotation), 16 synthesis problems from \synquid (which do not include resource annotations) that are not included in \resyn, and
 \numMoreBenchmarks new synthesis problems involving non-linear resource annotations. In these synthesis problems, synthesis specifications and auxiliary functions are all given as refinement types. 
For 3 of the new benchmarks, the auxiliary function required to split the input into smaller ones is not given---i.e.,
the synthesizer needs to identify it automatically. 

The three solvers (\tool, \synquid, and \resyn) have different features,
and hence not all synthesis problem can be encoded as synthesis benchmarks for a single solver.
In the rest of this section, we describe what benchmarks we considered
for each tool, and how we modified the benchmarks when needed.


\mypar{\synquid:} \synquid does not support resource bounds, so
we encoded \numAllBenchmarks synthesis problems as \synquid benchmarks by dropping the resource annotations.
\synquid returns the first program that meet the synthesis specification, and 
cannot provide any guarantees about the resource usage of the returned program. 
\synquid can solve \numBenchmarksSolvedBySynquid benchmarks, and takes on average 3.3s.
For 10 benchmarks \synquid synthesizes a non-optimal program---i.e., there exists another program with better concrete resource usage.
For example, on the \resyn-triple-2 benchmark (where the input is a list $xs$), \synquid found a solution with resource usage $O(|xs|^2)$, while both \tool and \resyn can synthesize a more efficient implementation with resource usage $O(|xs|)$.
	The two benchmarks that \synquid failed to solve include the new benchmark \tool-merge-sort'. In this benchmark, the auxiliary function required to break the input into smaller inputs is not given, without which the sizes of solutions become much larger. Therefore \synquid times out.

\mypar{\resyn:}
We ran \resyn on the 56 \resyn benchmarks with the corresponding concrete resource bounds.
We could not encode 16 problems because \resyn does not support non-linear resources
bounds---e.g., the bound $\log |y|$ in the AVL-insert \synquid benchmark. 
\resyn solved all 56 benchmarks with an average running time of 18.3s. 

\mypar{\tool:} We manually added resource usages and resource bounds to existing problems to encode them for \tool.
For \synquid benchmarks without concrete resource bounds, we chose
well-known time complexities as the bounds, e.g., we added the
resource bound $O(\emph{u}\log\emph{u})$ to the Sort-merge-sort
problem.
For the \resyn benchmarks, we translated the concrete resource usage
and resource bounds to the corresponding asymptotic ones---e.g., for
the \resyn-common' benchmark with the concrete resource bound
$|ys|+|zs|$, we constructed a \tool variant with the asymptotic bound
$O(\emph{u})$ and a size function $\lambda ys.\lambda zs.|ys|+|zs|$. 
We could not encode 9 synthesis problems as \tool benchmarks because
they involved higher-order functions, which are not supported by \tool, or the resource usage bound $O(2^\emph{u})$ (the Tree-create-balanced problem from \synquid).

\tool solved 68 benchmarks with an average running time of 8.1s. 
Unlike \synquid, \tool guarantees that the synthesized program satisfies the given resource bounds.
For 10 benchmarks, \tool found programs that had better resource usage than those synthesized by \synquid.
Furthermore, \tool can encode and solve $9$ problems that \resyn could not solve because the resource bounds involve logarithms. However, \tool cannot encode and solve $8$ benchmarks that involve higher-order functions.
\tool could solve 3 problems that required synthesizing both the main function (e.g., \tool-merge-sort) and its auxiliary function (e.g., the function splitting a given list into two balanced partitions). 
No other tool could solve the \tool-merge-sort' benchmark.


\begin{mdframed}[innerleftmargin = 3pt, innerrightmargin = 3pt, skipbelow=-0.25em]
\mypar{Finding.}
\tool can express and solve 68/\numAllBenchmarks benchmarks.
\tool has comparable performance to existing tools, and can
synthesize programs with resource bounds that are not supported by prior tools.
\end{mdframed}

\subsection{Pruning the Search Space with Annotated Types}
\label{Se:searchSpace}

\tool uses recurrence annotations to guide the search and avoids enumerating terms that are guaranteed to not 
match the specified complexity. 
We compared the numbers of E-terms enumerated by \tool and \synquid for 56 benchmark on which both tool produced same solutions.  \synquid always enumerated at least as many E-terms as \tool, and
\tool enumerated strictly fewer E-terms for 26/56 benchmarks. 
For these 26 benchmarks, \tool can on average prune the search space by 6.2\%.
For example, in one case (BST-delete) \tool enumerated 2,059 E-terms, while \synquid enumerated 2,202.

\begin{mdframed}[innerleftmargin = 3pt, innerrightmargin = 3pt, skipbelow=-0.25em]
\mypar{Finding.} On average, \tool reduces the size of the search space by 6.2\% for approximately half of the benchmarks.
\end{mdframed}

	
\section{Related Work}
\label{Se:RelatedWork}

\mypar{Resource-Bound Analysis.}
Rather than determining whether a given program satisfies a specification,
a synthesizer determines whether there exists a program that inhabits a
given specification.
The branch of verification that we draw upon for resource-based
synthesis is resource-bound analysis \cite{RBA:Wegbreit75}.

Within the literature on automated resource-bound analysis, there are methods that extract and solve recurrence relations for imperative code \cite{PUBS:AAGP11,Thesis:FloresMontoya,PLDI:BCKR20,PACMPL:KCBR18}. However, these methods are unlike the type system presented in this work because they extract concrete complexity bounds as recurrence relations, and then solve the recurrences to find a concrete upper bound on resource usage. The dominant terms of the resulting concrete bounds can then be used to state a big-$O$ complexity bound. 
In contrast, we want to synthesize programs with respect to a big-$O$ complexity directly, which is more similar to the manual reasoning of \cite{eberl2017proving,gueneau2018fistful}.
Thus, if we were to use these techniques for our problem, the first step in our synthesis algorithm
would be to pick a concrete complexity function given a big-$O$ complexity, and then
reverse the verification problem with regards to that concrete complexity.
However, for any big-$O$ complexity, there are an infinite number of functions that satisfy
that complexity, which presents a significant challenge at the outset.
Our design choice also has some drawbacks.
As noted in \cite{gueneau2018fistful}, reasoning compositionally with big-$O$ complexity
is challenging due to the hidden quantifier structure of big-$O$ notation.
Thus, to maintain soundness our type system has to sacrifice precision and generality in some places.
For example, when a function has multiple paths, our type system over-approximates by choosing
the largest complexity among all the paths.

Another set of methods to generate resource bounds are type-based \cite{hoffmann2011multivariate,CAV:HAH12,OOPSLA:WWC17,FoSSaCS:KH20}. 
As we discussed throughout the paper, the complexities generated by these methods are concrete functions and not expressed with big-$O$ notation,
although \cite{OOPSLA:WWC17} is sometimes able to pattern match a case of the Master Theorem.
These type systems differ from ours in a few ways. The AARA line of research \cite{hoffmann2011multivariate,CAV:HAH12,FoSSaCS:KH20} is able to assign amortized complexity to programs, but is not able to generate logarithmic bounds.
\cite{OOPSLA:WWC17} is also able to perform amortized analysis; however, the technique is not fully automated,
and instead requires the user to  provide type annotations on terms,
which are then checked by the type system.

\mypar{Type- and Resource-Aware Synthesis.}
The \name implementation is built on top of \synquid~\cite{polikarpova2016program}
a type-directed synthesis tool based on refinement types and polymorphism. 
The work that most closely resembles ours is \resyn
\cite{knoth2019resource}.
As in our work, they combine the type-directed synthesizer \synquid
with a type system that is able to assign complexity bounds to
functional programs.
The type system used in \resyn is based on one originally
used in the context of verification \cite{CAV:HAH12}.
That work uses a sophisticated type system to assign amortized
resource-usage bounds to a given program.
The type system of \resyn differs from the one presented in
\sectref{typeSystem} in a few significant ways.

As highlighted earlier, \resyn  automatically infers bounds on recursive functions using  amortized analysis  and is restricted
to linear bounds, whereas our system is able to synthesize complexities of the 
form $O(n^a\log^b n+c)$. 

Another difference is that \resyn synthesizes programs with a
concrete complexity bound.
This approach has advantages and disadvantages.
For instance, it places an extra burden on the human to provide the
correct bound with precise coefficient.
On the other hand, the user might want an implementation that has a
complexity with a small coefficient, whereas our system provides no
guarantee that the complexity of an implementation will have a small
coefficient in the dominant term: \tool only guarantees asymptotic behavior.

\resyn can synthesize programs with higher-order functions, which are not supported by \tool. 
To handle higher-order functions, \resyn attaches resource units to types, which gives it \emph{resource polymorphism}.
Moreover, costs of inputs with function types can be written generally as polymorphic types (i.e., costs can be polymorphic with respect to the size of the specific input types).
\tool does not have \emph{asymptotic resource polymorphism} because it cannot directly compose unknown big-$O$ functions (i.e., the complexity of higher-order inputs). 
We envision that with carefully crafted restrictions on the resource annotations of higher-order functions, \tool could handle synthesis problems involving such functions, e.g., assuming that the complexity of input functions is known and the refinements of input functions are precise enough. 
Because big-$O$ functions cannot be directly composed,
developing a more general extension to \tool that supports higher-order functions is a challenging direction for future work.

\mypar{Acknowledgments.}
Supported, in part,
by a gift from Rajiv and Ritu Batra;
by multiple Facebook Research Awards;
by a Microsoft Faculty Fellowship;
by NSF under grants 1420866, 1763871, and 1750965; 
and by ONR under grants N00014-17-1-2889 and N00014-19-1-2318.
Any opinions, findings, and conclusions or recommendations
expressed in this publication are those of the authors,
and do not necessarily reflect the views of the sponsoring
entities.


%
%
%
%
%
%
\bibliographystyle{splncs04}
\bibliography{ref}
\newpage
\appendix
\section{Semantics}
\label{Se:aSemantics}
In this section, we presented two kinds of semantics: 1) the concrete small-step semantics which define the concrete complexity of functions, and 2) a loose semantics which over-approximates the concrete semantics and will be used in the proof of the soundness theorem.

\mypar{Concrete semantics.} The evaluation rules of concrete-cost semantics are shown in \figref{concrete}.
In the concrete-cost semantics, a configuration $\concreteConfig{t,C}$
consists of a term $t$ and a nonnegative integer $C$ denoting the
resource usage so far.
The evaluation judgment
$\concreteConfig{t,C} \hookrightarrow \concreteConfig{t',C+C_\Delta}$ states
that a term $t$ can be evaluated in one step to a term (or a value) $t'$, with
resource usage $C_\Delta$.
We write $\concreteConfig{t,C} \hookrightarrow^* \concreteConfig{t',C+C_\Delta}$ to indicate
the reduction from $t$ to $t'$ in zero or more steps.

\mypar{Polynomial-Bounded Refinement Type.}
Before introducing the loose semantics, we introduce a class of refinement type that can be over-approximated as polynomials.

The resource usage of a function call depends on the size of the problem it makes calls to. However, predicates used to refine functions could be imprecise, and thus we may have to reason about sizes of problems with imprecise refinements. For example, consider a function $\texttt{square::}\langle\texttt{Int}\to\{\texttt{Int}~|~\emph{v}\ge0\},\generatecolor{(O( \emph{u}))}\rangle$. In the nested application  term $\texttt{square}~(\texttt{square}~\texttt{x})$, we know the size of the problem of the inner application is $\texttt{x}$, but we do not know the size of the problem $\texttt{square~x}$ of the outer application (all we know about $\texttt{square~x}$ is that it is non-negative).
To reason about input sizes, we introduce a  class of refinement types with which we can infer upper bounds of the sizes of problems.

Let us first assume that all terms and variables  are integers or integer functions (we will discuss the case where terms and variables hold values other than integers later). For an integer refinement type $\tau=\{\texttt{Int}~|~\varphi\}$ refined by a predicate $\varphi(\emph{v},\overline{x})$ over $\emph{v}$ and a tuple of variables $\overline{x}$, we say that $\varphi$ is \emph{bounded} by a \emph{polynomial} term $p(\overline{x})$, written as $\varphi\bound p$ if 
$$\exists\overline{c}>\overline{0}\forall\overline{x}\forall\emph{v}.~\overline{x}\!>\!\overline{c}\wedge\varphi(\emph{v},\overline{x})\implies |\emph{v}|<p(\overline{x}).$$
That is, there exist some positive constants $\overline{c}$ such that, for any $\overline{x}$ (point-wise) greater than $\overline{c}$ and for any $\emph{v}$, if $\emph{v},\overline{x}$ satisfying $\varphi$, the absolute value $|\emph{v}|$ is always less than $p(\overline{x})$. 
\begin{example}
	The refinement type $\tau:=\{\texttt{Int}~|~\emph{v}\le 2x+y\vee\emph{v}\le x+2y\}$ is bounded by the expression $p_1:=2x+2y$ and the expression $p_2:=3x+3y$ but not $p_3:=2x+y+1$, i.e., $\tau\bound p_1$,~$\tau\bound p_2$, and $\tau\not\sqsubset p_3$.
\end{example}

For datatype $D$ other than $\texttt{Int}$, we assume that there is an \emph{intrinsic measure} with output type $\texttt{Int}$ for every $D$, denoted by $|\!\cdot\!|_D$ (we omit the subscript if it is clear from the context). Intrinsic measures are specified by users for user-defined datatypes. For example, the intrinsic measure of lists can be defined as a function that computes the length of lists, i.e., $|l|=\texttt{len}~l$ for any list $l$. The instinct measure of $\texttt{Int}$ term is the absolute-value function. The condition of $p$ bounding $\tau=\{D~|~\varphi(\emph{v},\overline{x})\}$ becomes
$$\exists\overline{c}>\overline{0}\forall\overline{x}\forall\emph{v}~\overline{|x|}\!>\!\overline{c}\wedge\varphi(\emph{v},\overline{x})\implies |\emph{v}|<p(\overline{|x|}). $$

\begin{figure}[t]
	\begin{mathpar}
		{\inferrule*[right=\textsc{Sem-Tick }]
			{ }
			{\concreteConfig{\texttt{tick}(c,t),C} \hookrightarrow \concreteConfig{t,C+c}}
		}\\
		{\inferrule*[right=\textsc{Sem-App }]
			{ \overline{v}~\text{are values}}
			{\concreteConfig{(\texttt{fix}~f.\lambda \overline{x}.t)\overline{v},C} \hookrightarrow \concreteConfig{[(\texttt{fix}~f.\lambda \overline{x}.t)/f][\overline{v}/\overline{x}]t,C}}
		}\\
		{\inferrule*[right=\textsc{Sem-App-Arg }]
			{ \concreteConfig{t_1,0} \hookrightarrow \concreteConfig{t_2,C_\Delta}}
			{\concreteConfig{(\texttt{fix}~f.\lambda \overline{x}.t)t_1,C} \hookrightarrow \concreteConfig{(\texttt{fix}~f.\lambda \overline{x}.t)t_2,C+C_\Delta}}
		}\\	
		{\inferrule*[right=\textsc{Sem-Cond-True }]
			{ }
			{\concreteConfig{\texttt{if}~\texttt{true}~\texttt{then}~t_1~\texttt{else}~t_2,C} \hookrightarrow \concreteConfig{t_1,C}}
		}\\{\inferrule*[right=\textsc{Sem-Cond-False }]
			{ }
			{\concreteConfig{\texttt{if}~\texttt{False}~\texttt{then}~t_1~\texttt{else}~t_2,C} \hookrightarrow \concreteConfig{t_2,C}}
		}\\{\inferrule*[right=\textsc{Sem-Cond-Guard}]
			{ \concreteConfig{t_c,0} \hookrightarrow^* \concreteConfig{b,C_{\Delta}}\quad b~\text{is a Boolean value}}
			{\concreteConfig{\texttt{if}~t_c~\texttt{then}~t_1~\texttt{else}~t_2,C} \hookrightarrow \concreteConfig{\texttt{if}~b~\texttt{then}~t_1~\texttt{else}~t_2,C+C_\Delta}}
		}\\{\inferrule*[right=\textsc{Sem-Match}]
			{ \overline{v}~\text{are values}}
			{\concreteConfig{\texttt{match}~\texttt{C}_j(\overline{v})~\texttt{with}~|_i\texttt{C}_i(\overline{x}_i) \mapsto  t_i,C} \hookrightarrow \concreteConfig{[\overline{v}/\overline{x}_i]t_i,C}}
		}\\{\inferrule[\textsc{Sem-Match-Scrutinee}]
			{ \concreteConfig{t_s,0} \hookrightarrow^* \concreteConfig{\texttt{C}_j(\overline{v}),C_{\Delta}}\quad \overline{v}~\text{are  values}}
			{\concreteConfig{\texttt{match}~t_s~\texttt{with}~|_i\texttt{C}_i(\overline{x}_i) \mapsto  t_i,C} \hookrightarrow \concreteConfig{\texttt{match}~\texttt{C}_j(\overline{v})~\texttt{with}~|_i\texttt{C}_i(\overline{x}_i) \mapsto  t_i,C+C_\Delta}}
		}
	\end{mathpar}
	\caption{Evaluation rules of the concrete small-step semantics.}
	\label{Fi:concrete}
\end{figure}

\mypar{A loose cost model.}  
The semantics given previously give a standard notion of complexity. However, we find two challenges connecting these semantics to our synthesis algorithm. First, we allow users to supply auxiliary functions as signatures involving big-O notation as opposed to implementations. Second, our synthesis algorithm ensures complexity through the tracking of recursive calls, which are not present in the concrete semantics given above. To address these challenges we introduce an intermediate semantics that uses recurrence relations and big-O notation. We then show in \theoref{looseIsSound} that this intermediate semantics approximates complexity in the sense of \defrefs{complexity}{bigO}.

The signatures of auxiliary functions $g$ are of the form $\langle\tau_1\to\{B~|~\varphi(\emph{v},\overline{y})\},\generatecolor{O(\psi(\emph{u}))}\rangle$.  Although we don't really have the implementation of $g$, we assume that there exists some implementation $\texttt{fix}~g.\lambda \overline{y}.t$ of $g$, such that
\begin{itemize}
	\item for any input $\overline{x}$, the output of $g$ on $\overline{x}$ satisfies the signature, i.e., $\concreteConfig{(\texttt{fix}~g.\lambda \overline{y}.t)\overline{x},0}\hookrightarrow^*\concreteConfig{v_{\overline{x}},C_{\overline{x}}}$ implies $\varphi(v_{\overline{x}},\overline{x})$; and
	\item  for any input $\overline{x}$, the complexity  of $g$ is bounded by $\psi(\emph{u})$, i.e., $T_g(n)\in O(\psi(n))$.
\end{itemize}
For the top-level function we are evaluating, we assume that its signature $\langle\tau_1\to\{B~|~\varphi(\emph{v},\overline{y})\},\generatecolor{O(\psi(\emph{u}))}\rangle$ is also given, whereas the semantics of $f$ is over-approximated by its refinement, i.e., for any input $\overline{x}$,  $\concreteConfig{(\texttt{fix}~f.\lambda \overline{y}.t)\overline{x},0}\hookrightarrow^*\concreteConfig{v_{\overline{x}},C_{\overline{x}}}$ implies $\varphi(v_{\overline{x}},\overline{x})$.

Now we introduce our intermediate loose semantics. Formally, reductions are defined between configurations. Each configuration $\config{\hat{t},\checkcolor{\mathcal{R}}}$ is a pair of an extended term $\hat{t}$ and a recurrence parameter $\mathcal{R}$. Extended terms $\hat{t}::=t~|~\varphi$ are either terms $t$ or  formula expressions $\varphi$. Recurrence parameters $\mathcal{R}::=\phi~|~\bot~|~\mathcal{R}\!\parallel\!\mathcal{R}$ are either size expressions $\phi$, or parameters combined by a \emph{parallel} operator $\parallel$, i.e., $\mathcal{R}$ is a collection of size expressions.    
The parallel operator $\parallel$ distributes over the plus, i.e., $(\mathcal{R}_1\parallel\mathcal{R}_2)+\mathcal{R}_3=\mathcal{R}_1+\mathcal{R}_2\parallel \mathcal{R}_1+\mathcal{R}_3$. Intuitively, a parameter $\mathcal{R}$ without parallelism denotes the recurrence relation of the function along one path. When the function contains more than one path, the overall parameter will be sub-parameters in parallel.

We use $\config{t,\checkcolor{\mathcal{R}}}\mapsto\config{\varphi,\checkcolor{\mathcal{R}'}}$ to denote a step of a reduction.  The goal is to reduce a term $t$ in a function $\texttt{f}$ to a predicate $\varphi$ such that $\varphi$ describes the behavior of $t$---$\varphi$ is a refinement of $t$. At the same time,  recurrence relations can be built  by incrementally appending expressions representing  resource usage to the recurrence parameter $\checkcolor{\mathcal{R}}$.

Because we are building recurrence relations for a function $\texttt{f}$, the reduction always starts from a $\texttt{\textbf{fix}}$-term and an empty parameter $\bot$. The result configuration is the refinement $\varphi$ of the function body $t$  with the recurrence parameter $\checkcolor{ \mathcal{R}}$. We use the function $T:\texttt{Int}\to\texttt{Int}$ to denote the resource usage of the function $\texttt{f}$; hence, the recurrence parameter we build for $\texttt{f}$ will be the recurrence relation of resource usage $T$.
$${\inferrule*[right=\textsc{LooseSem-Fix }]
	{\config{t,0}\mapsto \config{\varphi,\checkcolor{\mathcal{R}}} }
	{\config{\texttt{\textbf{fix}~f}.\lambda x_1..\lambda x_n.t,~\bot}\mapsto \config{\varphi,~\checkcolor{ \mathcal{R}}}}
}$$ 

In our loose semantics, each auxiliary function $g$ has a resource annotation $\generatecolor{(O(\psi_g))}$ denoting the resource usage of $g$, a logical signature $\varphi_g$ denoting the behavior of $g$, and a size function $\texttt{size}_g$. Resource usage happens  when an auxiliary function is called.
$${\inferrule*[right=\textsc{LooseSem-App}]
	{
		g\!:\!\langle x_1\!:\!\tau_1\!\to\!..\!\to\!x_n\!:\!\tau_n\!\to\!\{B~|~\varphi_g\},\generatecolor{O(\psi_g)} \rangle\\  \forall i.\config{e_i,0}\mapsto\config{\varphi_i,\checkcolor{\mathcal{R}_i}}\qquad\varphi:= \varphi_g\wedge\bigwedge_{i=1}^n[x_i/\emph{v}]\varphi_i\quad
		\forall i.\varphi_i\bound v_i}
	{\config{g~e_1..~e_n,0}\mapsto \config{\varphi,\checkcolor{~[(\texttt{size}_g~v_1..~v_n)/\emph{u}]\psi_g+\sum_{i=1}^n\mathcal{R}_i}}}
}$$
That is, if each callee $e_i$ can be reduced to a predicate $\varphi_i$ bounded by $v_i$ with the recurrence parameter change $\mathcal{R}_i$, the non-recursive application term  $g~e_1..~e_n$ will be reduced to the predicate $\varphi$ with resource usage  $\checkcolor{[\texttt{size}_g~v_1..~v_n/\emph{u}]\psi_g}$ (the resource usage of $g$) and $\checkcolor{\sum_{i=1}^n\mathcal{R}_i}$ (resource usage used to evaluate $\{e_i\}_i$). We over-approximate the size of the problem $e_1..e_n$ using the upper bounds $v_i$ of callee's behavior predicates $\varphi_i$. The result predicate $\varphi$ is actually the behavior $\varphi_g$ of $g$ with each argument $x_i$ instantiated with the semantic predicates $[x_i/\emph{v}]\varphi_i$ of callee $e_i$.

The semantics of performing a recursive call is a bit different. The resource usage is instead $\checkcolor{T(\texttt{size}_\texttt{f}~v_1..~v_n)}$---the resource usage $T$ on a sub-problem with size  $\texttt{size}_\texttt{f}~v_1..~v_n$ where the $v_i$'s are over-approximations of the callee $e_i$'s.
$${\inferrule[\textsc{LooseSem-RecApp }]
	{ \forall i.\config{e_i,0}\mapsto^\config{\varphi_i,\checkcolor{\mathcal{R}_i}} \qquad 
		\forall i.\varphi_i\bound v_i 	\\
		\texttt{f}\!:\!\langle x_1\!:\!\tau_1\!\to\!..\!\to\!x_n\!:\!\tau_n\!\to\!\{B~|~\varphi_\texttt{f}\},\generatecolor{O(\psi_\texttt{f})} \rangle\qquad\varphi:= \varphi_\texttt{f}\wedge\bigwedge_{i=1}^n[x_i/\emph{v}]\varphi_i}
	{\config{\texttt{f}~e_1..~e_n,0}\mapsto \config{\varphi,~\checkcolor{T(\texttt{size}_\texttt{f}~v_1..~v_n)+\sum_{i=1}^n\mathcal{R}_i}}}
}$$

The reduction of \texttt{if}-terms will result in $\texttt{ite}$ predicates. The resulting recurrence parameter $\checkcolor{\mathcal{R}_e+\mathcal{R}_1\parallel\mathcal{R}_e+\mathcal{R}_2}$ uses the parallel operator because there are two paths in an \texttt{ite} term.
$${\inferrule[\textsc{LooseSem-Cond }]
	{\config{e,0}\mapsto\config{\varphi_e,\checkcolor{\mathcal{R}_e}}\quad\config{t_1,0}\mapsto\config{\varphi_1,\checkcolor{\mathcal{R}_1}}\quad\config{t_2,0}\mapsto\config{\varphi_2,\checkcolor{\mathcal{R}_2}}\\ \varphi:=\varphi_1\vee\varphi_2}
	{\config{\texttt{\textbf{if}}~e~\texttt{\textbf{then}}~t_1~\texttt{\textbf{else}}~t_2,0}\mapsto \config{\varphi,~\checkcolor{\mathcal{R}_e+\mathcal{R}_1\parallel\mathcal{R}_e+\mathcal{R}_2}}}
}$$

The rules for match term and variable term are similar.
$${\inferrule[\textsc{LooseSem-Match}]
	{\forall i.~\config{t_i,0}\mapsto\config{\varphi_i,\checkcolor{\mathcal{R}_i}}\quad\varphi:=\varphi_1\vee\ldots\vee\varphi_m}
	{\config{\texttt{\textbf{match}}~e~\texttt{\textbf{with}}~|_i~\texttt{C}_i~(x_i^1 \ldots x_i^n)\mapsto t_i ,0}\mapsto \config{\varphi,~\checkcolor{\mathcal{R}_e+\mathcal{R}_1\parallel\ldots\parallel\mathcal{R}_e+\mathcal{R}_m}}}
}$$
$${\inferrule*[Right=\textsc{LooseSem-Var }]
	{\ }
	{\config{x,0}\mapsto \config{|\emph{v}|=|x|,~0}
}}$$
With the above rules, we can then say the complexity of a function to be any expression that satisfy the recurrence parameter of the function.

\begin{theorem}[Complexity bounds]\label{The:looseIsSound}
	Given a function term $\texttt{\textbf{fix}~f}.\lambda \overline{x}.t_{\texttt{f}}$, the signature type of $f$, and the signature types of all auxiliary function used in $f$, if 
	\begin{itemize}
			\item the refinements of  auxiliary functions and $f$ are all bounded by some monotonic non-decreasing polynomials; 
		\item the function body $t_{\texttt{f}}$ can be reduced to $\config{\cdot,\checkcolor{\mathcal{R}_{\texttt{f}}}}$, where $\checkcolor{\mathcal{R}_{\texttt{f}}}$ is of form $\checkcolor{\mathcal{R}_{\texttt{f},1}\parallel..\parallel\mathcal{R}_{\texttt{f},m}}$, and none of the $\checkcolor{\mathcal{R}_{\texttt{f},i}}$'s contain an occurrence of the parallel operator; and
		\item there exists a function $\psi$ that satisfies
\[
  \forall i.~T(\texttt{size}_\texttt{f}(\overline{x}))\le\checkcolor{\mathcal{R}_{\texttt{f},i}}\implies T(\texttt{size}_{\texttt{f}}(\overline{x}))\in O(\psi(\texttt{size}_{f}(\overline{x}))),
\]
	\end{itemize} then the complexity $T_f$ of $\texttt{f}$ is bounded by the function $\psi$, i.e., $T_f\in O(\psi)$.
\end{theorem}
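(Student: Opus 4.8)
The plan is to show that the loose semantics $\mapsto$ soundly \emph{over-approximates} the concrete cost semantics $\hookrightarrow$, so that the recurrence parameters $\mathcal{R}_{f,i}$ are genuine upper bounds on the concrete resource usage along each execution path; the conclusion $T_f\in O(\psi)$ then follows by feeding these valid recurrences into the third hypothesis. Throughout, recall that $T_f$ is the concrete complexity of \defref{complexity}, defined as a supremum of concrete costs over inputs of a fixed size.

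First I would prove an \emph{over-approximation lemma}: for every concrete input $\overline{x}$ with $\texttt{size}_f(\overline{x})=n$, if the instantiated body evaluates as $\concreteConfig{t_f[\overline{x}/\overline{y}],0}\hookrightarrow^*\concreteConfig{v,C}$, then $C\le\mathcal{R}_{f,i}$ for the single path $i$ that the run follows, where in $\mathcal{R}_{f,i}$ the symbol $T$ is read as $T_f$ and $\emph{u}$ is set to $n$. The proof is by structural induction on $t_f$, mirroring the loose-semantics rules. The leaf cases (variables, \textsc{LooseSem-Var}) carry zero cost. For a non-recursive application $g\,e_1..e_m$ (\textsc{LooseSem-App}) the concrete cost splits into the cost of evaluating the arguments (handled by the induction hypotheses, producing the $\sum_i\mathcal{R}_i$ summand) plus the cost of one call to $g$; since the auxiliary-signature assumption guarantees an implementation of $g$ whose complexity lies in $O(\psi_g)$, that call costs at most $\psi_g$ applied to the actual size of its argument tuple. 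For a recursive application (\textsc{LooseSem-RecApp}) the call costs $T_f$ of the actual subproblem size, directly by the definition of $T_f$ as a supremum. For the branching rules (\textsc{LooseSem-Cond}, \textsc{LooseSem-Match}) the concrete run descends into exactly one branch, so its cost is bounded by $\mathcal{R}_e+\mathcal{R}_j$ for the chosen branch $j$, which is precisely one of the parallel-free components of $\mathcal{R}_f$; this is what selects the index $i$, and this is where I use the hypothesis that each $\mathcal{R}_{f,i}$ is free of the operator $\parallel$.

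The crux of the lemma is bounding those \emph{actual}, dynamically computed subproblem sizes by the static polynomial over-approximations $v_i$ used in the loose rules. Here I would use that signatures over-approximate implementations: the evaluated argument values satisfy their refinements $\varphi_i$, and $\varphi_i\bound v_i$ then forces each argument's intrinsic measure below $v_i$ once the inputs exceed the constants $\overline{c}$ witnessing $\bound$. Monotonicity of $\texttt{size}_g$, $\texttt{size}_f$, $\psi_g$ and $T_f$ then converts these size bounds into cost bounds, so that the per-call concrete costs are dominated exactly by the summands recorded in $\mathcal{R}_{f,i}$. Taking the supremum over all $\overline{x}$ of size $n$ yields $T_f(n)\le\max_i\mathcal{R}_{f,i}$ with $T:=T_f$, i.e. $T_f$ satisfies the pointwise maximum of the finitely many path recurrences. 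I would then invoke the third hypothesis: each individual recurrence $T_f(n)\le\mathcal{R}_{f,i}$ has all solutions in $O(\psi)$, and since there are finitely many paths and big-$O$ is closed under finite maxima (using monotonicity of the bounds), the maximum recurrence that $T_f$ actually satisfies still forces $T_f\in O(\psi)$.

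I expect the main obstacle to be this size-over-approximation step: showing that the actual subproblem sizes are genuinely bounded by the polynomial bounds $v_i$ extracted from the (possibly imprecise) refinements, and that this bound propagates soundly through nested applications such as $\texttt{square}(\texttt{square}(\texttt{x}))$. The asymptotic nature of $\bound$ (valid only beyond the constants $\overline{c}$) and the demand for uniform monotonicity of all measures and cost functions are exactly where the theorem's hypotheses are consumed, and care is needed to verify that the ``large enough input'' caveat is harmless for the final big-$O$ statement. A secondary subtlety is reconciling the per-path form of the third hypothesis with the worst-path bound that a single deterministic concrete run produces; the finite-maximum closure argument above is meant to bridge that gap.
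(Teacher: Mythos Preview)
Your proposal is correct and follows essentially the same route as the paper: a structural induction showing that the concrete cost along the executed path is bounded by the corresponding parallel-free component $\mathcal{R}_{f,i}$, handling variables, non-recursive and recursive applications, and branching exactly as you outline, using $\sqsubset$ and monotonicity for the size step, and then invoking the third hypothesis path-by-path. The one refinement the paper makes explicit is to carry, as a \emph{joint} induction hypothesis alongside the cost bound, the statement that the loose predicate $\varphi_t$ over-approximates the concrete value (i.e.\ $[v_t/v][\overline{in}/\overline{x}]\varphi_t$ is satisfiable); you correctly flag this size-over-approximation step as the ``crux'' but phrase it as an appeal to signatures, whereas in the nested-application case you worry about, knowing that the concrete value of an inner subterm falls under the $\varphi_i$ against which $v_i$ was chosen comes only from the inner induction hypothesis, not from any signature assumption.
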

\begin{proof}
	\begin{changebar}
		
		We first show by induction on a loose semantic derivation that for any term $t$, if
                \begin{itemize}
			\item		$\config{t,0}\mapsto\config{\varphi_t,\checkcolor{\mathcal{R}}}$, where $\checkcolor{\mathcal{R}}$ is of form $\checkcolor{\mathcal{R}_1}\parallel..\parallel\checkcolor{\mathcal{R}_l}$, and none of the $\checkcolor{\mathcal{R}_i}$'s contains an occurrence of the parallel operator; and
			\item $\concreteConfig{[(\texttt{fix}~f.\lambda\overline{x}.t_f/f][(\texttt{fix}~g.\lambda\overline{y}.t_g/g][\overline{in}/\overline{x}]t,0}\hookrightarrow^*\concreteConfig{v_t,C_t}$ for all $\overline{in}$, 
		\end{itemize}
                then we have for any $\overline{in}$, $[v_t/v][\overline{in}/\overline{x}]\varphi_t$ is satisfiable---that is, the loose semantics $\varphi_t$ over-approximates the concrete semantics $v_t$---and $\exists   c,k>0~\forall \overline{in}~\exists j.~\texttt{size}_f(\overline{in})>{c}\implies C_t<k*\checkcolor{\mathcal{R}_j}$.
		
		
		\textbf{Base case.} The base case is when $\varphi$ is a variable term. The $\varphi$ for a variable term is precise, and the resource usage are 0 in both semantics. 
		
		\textbf{Non-recursive application terms.} When $t=g~e_1..~e_n$ is a non-recursive application term, with the induction hypothesis, we have that the loose semantics $\varphi_i$ for each $e_i$ over-approximates the concrete semantics $v_{e_i}$, i.e., $[v_{e_i}/v][\overline{in}/\overline{x}]\varphi_{e_i}$ is satisfiable for all $e_i$ and $\overline{in}$.
		According to the first assumption of the signature of auxiliary functions, the predicate $\forall \overline{in}.~\varphi_g(v_{t},\overline{v}_{e_i})$ is valid. The loose semantics of $t$ is 
		$\varphi_t:=\varphi_g(v,\overline{y})\wedge_{i=1}^n[y_i/v]\varphi_{e_i}.$
		Finally, the predicate
		\begin{eqnarray*}
			\forall\overline{in}. [v_t/v][\overline{in}/\overline{x}]\varphi_t =\varphi_g(v_t,\overline{y})\wedge_{i=1}^n[y_i/v][\overline{in}/\overline{x}]\varphi_{e_i}
		\end{eqnarray*}
		is satisfiable with the assignment $y_i\gets v_{e_i}$ for all $i$.
		
		According to the second assumption on signatures of auxiliary functions, we have $T_g(n)\in O(\psi_g(n))$. That is, $\exists \overline{c},k~\forall\overline{in}.~\texttt{size}_g(\overline{in})>\overline{c}\implies C_t\le k*\psi_g(\texttt{size}_g(\overline{|v_{e_i}|}))$. Then we have $\exists \overline{c},k\forall\overline{in}.~\texttt{size}_g(\overline{in})>\overline{c}\implies C_t\le k*\checkcolor{~\psi_g((\texttt{size}_g~v_1..~v_n))}$  because each $v_i$ is the bound of $\varphi_i$ and hence the bound of the concrete value $v_{e_i}$.

		\textbf{Recursive application terms.} When $t=f~e_1..~e_n$ is a recursive application term, the proof of the behavior part is similar as above because we have the same behavior assumption on the signature of the top-level function. The concrete cost $C_t$ of $t$ is bounded by $T_f(\texttt{size}_f(|v_{e_1}|,\ldots,|v_{e_n}|))$, where $T_f$ is the complexity function of $f$, plus the concrete cost of evaluating each $e_i$ (which is bounded by $\checkcolor{\sum \mathcal{R}_i}$ according to the induction hypothesis). Note that here $T_f$ is an uninterpreted, monotonic, non-decreasing, non-negative function. The loose semantics $\checkcolor{T(\texttt{size}_f(~v_1..~v_n))}$ also contains an uninterpreted, monotonic, non-decreasing, non-negative function $T$. In this proof, we generalize the comparison symbol $\le$ to 
		$T_f(n)\le T(n')$ if $n\le n'$, that is, the comparison between  uninterpreted functions is the result of comparison between their inputs. With such generalization,
		$T_f(\texttt{size}_f(|v_{e_1}|,\ldots,|v_{e_n}|))\le\checkcolor{T(\texttt{size}_f(~v_1..~v_n))}$ because each $v_{e_i}$ is bounded by $v_i$ according to the induction hypothesis.
		
		\textbf{Branching terms.} When $t=\texttt{\textbf{if}}~e~\texttt{\textbf{then}}~t_1~\texttt{\textbf{else}}~t_2$ is a conditional term, the concrete cost $C_{t_1}+C_e$ or $C_{t_2}+C_e$ of it is bounded by the loose cost $\checkcolor{\mathcal{R}_e+\mathcal{R}_1}$ or $\checkcolor{\mathcal{R}_e+\mathcal{R}_2}$, respectively, according to the induction hypothesis.  The concrete semantics is either $v_{t_1}$ or $v_{t_2}$. According to the induction hypothesis, $\varphi_{t_1}\vee\varphi_{t_2}$ over-approximates both branches.
		
		The case of match term is similar to the case for conditional terms.
		
		Now, for any input $\overline{in}$, the complexity function $T_f$ of the top-level function $f$ should  satisfy the recurrence parameter along one of the paths, i.e., 
		$\exists k\forall \overline{in}~\exists j.T_f(\texttt{szie}_f(\overline{in}))\le k*[T_f/T]\checkcolor{\mathcal{R}_{f,j}}.$  So, if a bound $\psi$ dominates the loose cost for every path, it will always dominate the complexity $T_f$ of $f$.
	\end{changebar}
\end{proof}

\begin{example}
	For the program shown in 
	\eqref{prod_log}, there are three paths. At the beginning, $\texttt{\textbf{fix}~prod}.\lambda x.\lambda y.t $ is reduced to $\config{\varphi,\checkcolor{T(\texttt{size}_{\texttt{\textbf{prod}}}~x~y)\le 0+\mathcal{R}}}$ where $\mathcal{R}$ is the reduction result of the function body, and $\texttt{size}_{\texttt{\textbf{prod}}}~x~y=x$ since the size function for \texttt{prod} is $\lambda z.\lambda w.z$
	
	The first \texttt{ite} term $\texttt{\textbf{if}}~x==0~\texttt{\textbf{then}}~t_1~\texttt{\textbf{else}}~t_2$ is reduced to $\config{\texttt{ite}(x==0,~\varphi_1,~\varphi_2),\checkcolor{\mathcal{R}_e+\mathcal{R}_1\parallel\mathcal{R}_e+\mathcal{R}_2}}$ where the condition contain one equivalence operator ($\mathcal{R}_e=1$), the \texttt{then} branch has 0 resource usage (it is a variable term) ($\mathcal{R}_1=0$), and the \texttt{else} branch has resource usage $\mathcal{R_2}$, which we learn from the reduction of $t_2$.
	
	The application term $\texttt{div2~}x$ is reduced to $\config{v=\frac{z}{2}\wedge z=x,~1}$ since the resource  usage of $\texttt{div2}$ is $O(1)$.
	
	The recursive-application term \texttt{prod~(div2~x)~y} is reduced to $\config{v=z*w\wedge z=\frac{x}{2}\wedge w= y,~\mathcal{R}_0+T(\frac{x}{2})+1}$
	
	Overall the recurrence relation can be built as
	$T(x)\le 1\parallel T(x)\le 4 +T(\frac{x}{2})\parallel T(x)\le 5+T(\frac{x}{2})$. Thus the complexity of $\texttt{prod}$ is bounded by $\log x$ since $T(x)\le1\implies T(x)\in O(\log x)$, and, $T(x)\le 5+T(\frac{x}{2})\implies T(x)\in O(\log x)$ according to the Master Theorem.

\end{example}


\section{Typing rules}
\label{Se:a3Typing}

\begin{figure}[tb]
	\begin{mathpar}
		\boxed{\Gamma\vdash \tau<:\tau'}\hspace{5mm}
		{\inferrule*[right=\textsc{<:-Fun}]
			{\Gamma\vdash \tau_y<:\tau_x\qquad \Gamma;y:\tau_y\vdash[y/x]\tau<:\tau'}
			{ \Gamma\vdash x:\tau_x\to\tau~<:~y:\tau_y\to\tau'}
		}\\
		{\inferrule*[right=\textsc{<:-Sc}]
			{\Gamma\vdash B<:B'\qquad \Gamma\models\varphi\implies\varphi'}
			{ \Gamma\vdash \{B~|~\varphi\}<:\{B'~|~\varphi'\}}
		}\hspace{5mm}
		{\inferrule*[right=\textsc{<:-Refl}]
			{\ }
			{ \Gamma\vdash B~<:~B}
		}\\
		\\
		{\inferrule*[right=\textsc{<:-Rec}]
			{ c'>c\quad \Gamma\models [x/\emph{v}]\phi\wedge[x'/\emph{v}]\phi'\Rightarrow \emph{x}'>\emph{x}}
			{ \Gamma\vdash \generatecolor{[c,\phi]}<:\generatecolor{[c',\phi']}}
		}\qquad
		{\inferrule*[right=\textsc{<:-Bound}]
			{ \Gamma\models \psi\in O(\psi')}
			{ \Gamma\vdash \generatecolor{O(\psi)}<:\generatecolor{O(\psi')}}
		}\\
		{\inferrule[\textsc{<:-Rec-Split}]
			{  \Gamma\models \phi_1=\phi_2\qquad \Gamma \vdash \tau <:\tau'}
			{  \Gamma\vdash \langle\tau',\generatecolor{([c_1+c_2,\phi_1],[c_3,\phi_3]\ldots[c_n,\phi_n];O(\psi))}\rangle<:\langle\tau,\generatecolor{([c_1,\phi_1],[c_2,\phi_2],\ldots[c_n,\phi_n];O(\psi))}\rangle}
		}\\
		{\inferrule[\textsc{<:-Rec-Combine}]
			{  \Gamma\models \phi_1=\phi_2\qquad \Gamma \vdash \tau <:\tau'}
			{  \Gamma\vdash \langle\tau,\generatecolor{([c_1,\phi_1],[c_2,\phi_2],\ldots[c_n,\phi_n];O(\psi))}\rangle<:\langle\tau',\generatecolor{([c_1+c_2,\phi_1],[c_3,\phi_3]\ldots[c_n,\phi_n];O(\psi))}\rangle}
		}\\{\inferrule*[right=\textsc{<:-Anno-Rec}]
			{\Gamma\vdash \tau<:\tau'\qquad \forall i .\Gamma\vdash \generatecolor{[c_i,\phi_i]}<:\generatecolor{[c'_i,\phi'_i]}\qquad \Gamma\vdash \generatecolor{O(\psi)}<:\generatecolor{O(\psi')}}
			{ \Gamma\vdash \langle\tau,\generatecolor{([c_1,\phi_1],\ldots[c_n,\phi_n];O(\psi))}\rangle<:\langle\tau',\generatecolor{([c_1',\phi_1']\ldots[c'_n,\phi'_n];O(\psi'))}\rangle}
		}
	\end{mathpar}
	\caption{Subtyping rules.}
	\label{Fi:subtype}
\end{figure}
\mypar{Subtyping.} 
	Subtying judgments are shown in \figref{subtype}, and are standard. The most notable rules are \textsc{<:-Rec} and \textsc{<:-Bound}, which states that bounds $\generatecolor{\psi}$ and the sub-problems' size $\generatecolor{\phi}$ in the annotations in subtypes should be less than in the supertype. 
	\[	{\inferrule*[right=\textsc{<:-Rec}]
		{ c'>c\quad \Gamma\models [x/\emph{v}]\phi\wedge[x'/\emph{v}]\phi'\Rightarrow \emph{x}'>\emph{x}}
		{ \Gamma\vdash \generatecolor{[c,\phi]}<:\generatecolor{[c',\phi']}}
	}\\
	{\inferrule*[right=\textsc{<:-Bound}]
		{ \Gamma\models \psi\in O(\psi')}
		{ \Gamma\vdash \generatecolor{O(\psi)}<:\generatecolor{O(\psi')}}
	}\]
	For example, if one branch of some branching term has type $\langle\tau,(\generatecolor{[1,\floor{\frac{\emph{u}}{3}}],O(\psi)})\rangle$, it can be over-approximated by a super type $\langle\tau,(\generatecolor{[1,\floor{\frac{\emph{u}}{2}}],O(\psi)})\rangle$. The idea is that the resource usage of an application calling to a problem of size $\generatecolor{\floor{\frac{\emph{u}}{2}}}$, will be larger than the application calling to a smaller problem of size $\generatecolor{\floor{\frac{\emph{u}}{3}}}$, with the assumption that all resource usages are monotonic. 

 \begin{figure*}[t!]
	\begin{mathpar}
		\boxed{\Gamma\vdash t\irefine\gamma}\hspace{5mm}
		{\inferrule*[right=\textsc{T-Abs}]
			{\Gamma'=[\texttt{recFun}\gets \texttt{f}][\texttt{args}\gets x_1 \ldots x_n]\Gamma\\ 			
				\gamma_f=\langle {x}_1:\tau_{1}\to \ldots \to x_n:\tau_n\to\tau,\generatecolor{(B)}\rangle\\	\Gamma';x_1\!:\!\langle\tau_1,\generatecolor{O(1)}\rangle; \ldots; x_n\!:\!\langle\tau_n,\generatecolor{O(1)}\rangle;\texttt{f}:\gamma_\texttt{f}\vdash t\irefine\langle\tau,\generatecolor{(A)} \rangle
			}
			{ \Gamma\vdash \texttt{fix}~\texttt{f}.\lambda x_1 \ldots \lambda x_n.t\irefine \langle {x}_1:\tau_{1}\to \ldots \to x_n:\tau_n\to\tau,\generatecolor{(B)}\rangle}
		}
		\\{\inferrule*[right=\textsc{T-If}]
			{\Gamma\vdash \alpha\share\alpha_1|\alpha_2\quad  \Gamma \vdash e\eguess \langle\{\texttt{Bool}~|~\varphi_e\},\generatecolor{\alpha_1}\rangle\\
				\Gamma,\varphi_e \vdash t_1\irefine \langle\{B~|~\varphi\},\generatecolor{\alpha_2}\rangle \quad\Gamma,\neg\varphi_e\vdash t_2\irefine \langle\{B~|~\varphi\},\generatecolor{\alpha_2}\rangle
			}
			{ \Gamma\vdash \texttt{\textbf{if}}~e~\texttt{\textbf{then}}~t_1~\texttt{\textbf{else}}~t_2\irefine\langle\{B~|~\varphi\},\generatecolor{\alpha}\rangle
			}
		}	\\
		{\inferrule*[right=\textsc{T-Match}]
			{\Gamma\vdash \alpha\share\alpha_1|\alpha_2\quad  \Gamma \vdash e\eguess \langle\tau_s,\generatecolor{\alpha_1}\rangle\\
				\texttt{C}_i=\tau_1\!\to \ldots \to\!\tau_n\!\to\!\tau_s\\ \Gamma;x_i^1:\tau_1; \ldots; x_i^n:\tau_n\vdash t_i\irefine\langle\tau,\generatecolor{\alpha_2}\rangle
			}
			{ \Gamma\vdash \texttt{\textbf{match}}~e~\texttt{\textbf{with}}~|_i~\texttt{C}_i~(x_i^1 \ldots x_i^n)\mapsto t_i\irefine\langle\tau,\generatecolor{\alpha}\rangle}
		}
		
	\end{mathpar}
	\vspace{-5mm}
	\caption{Typing rules of I-terms}
	\label{Fi:rulesI}
\end{figure*}

\begin{figure*}[tb]
	\scriptsize
	\begin{mathpar}
		\boxed{\Gamma\vdash \gamma\share\gamma_1|\gamma_2}\hspace{5mm}
		{\inferrule*[right=\textsc{S-Pot}]
			{c_1,c_2\ge0\qquad c_1+c_2\le c}
			{ \Gamma\vdash \generatecolor{[c,\phi]}\share\generatecolor{[c_1,\phi]}~|~\generatecolor{[c_2,\phi]}}
		}\hspace{5mm}
		{\inferrule*[right=\textsc{S-Refl}]
			{\ }
			{ \Gamma\vdash \generatecolor{[c,\phi]}\share\generatecolor{[c,\phi]}}
		}\\
		{\inferrule*[right=\textsc{S-Anno}]
			{ \forall i.\Gamma\vdash \generatecolor{[c_i,\phi_i]}\share\generatecolor{[c_i^1,\phi_i]}~|~\generatecolor{[c_i^2,\phi_i]}}
			{ \Gamma\vdash \generatecolor{([c_1,\phi_1], \ldots, [c_n,\phi_n];O(\psi))}\share\generatecolor{([c_1^{1},\phi_1], \ldots, [c_n^{1},\phi_n];O(\psi)})~|~\generatecolor{([c_1^{2},\phi_1], \ldots, [c_n^{2},\phi_n];O(\psi))}}
		}\\
		{\inferrule*[right=\textsc{S-Type}]
			{ \Gamma\vdash \alpha\share\alpha_1~|~\alpha_2}
			{ \Gamma\vdash \langle\tau,\alpha\rangle\share\langle\tau,\alpha_1\rangle~|~\langle\tau,\alpha_2\rangle}
		}\hspace{5mm}
		{\inferrule*[right=\textsc{S-Mul}]
			{ \Gamma\vdash \gamma\share\gamma~|~\gamma'\quad\Gamma\vdash \gamma'\share\gamma_2~|~\gamma_3}
			{ \Gamma\vdash \gamma\share\gamma_1~|~\gamma_2~|~\gamma_3}
		}
	\end{mathpar}
	\caption{Sharing rules}
	\label{Fi:share}
\end{figure*}

\mypar{Cost sharing.}
The sharing operator $\alpha\share\alpha_1|\alpha_2$ partitions the recursive-call cost of $\alpha$ into $\alpha_1$  and $\alpha_2$---i.e., the sum of the costs in $\alpha_1$ and $\alpha_2$ equals the cost in $\alpha$. Sharing rules are shown in \figref{share}.   $\textsc{S-Pot}$ shares a single cost $c$ to two costs $c_1$ and $c_2$ such that their sum is no more than $c$. An annotation can be shared to two parts (\textsc{S-Anno}) if every recursive cost $\generatecolor{[c_i,\phi_i]}$ in it can be shared to two parts $\generatecolor{[c_i^{1},\phi_1]}$ and $\generatecolor{[c_i^{2},\phi_2]}$. Finally, annotations can also be shared to more than two parts (\textsc{S-Mul}).

\mypar{Soundness theorem.}

The proof of the soundness theorem uses  the following crucial lemma connecting the recurrence annotations with the actual recurrence relations of functions.
\begin{lemma}
	\label{Lem:rec}
	Given a function $\texttt{fix }f.\lambda x_1..\lambda x_n.t$, if\ $\Gamma\vdash t\irefine \langle\tau,\generatecolor{([c_1,\phi_1],..,[c_m,\phi_m];O(\psi))}\rangle$, then the complexity~$T_f$ of $f$ satisfies the recurrence relation $T(u)\le c_1T(\phi_1)+..+c_mT(\phi_m)+O(\psi)$
\end{lemma}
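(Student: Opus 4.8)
The plan is to prove the lemma by induction on the typing derivation for the body $t$, using the loose cost semantics of \appref{aSemantics} as the bridge between the static recurrence annotation and the concrete complexity $T_f$. Concretely, I would first establish the following invariant by structural induction on the derivation of $\Gamma\vdash t\irefine\langle\tau,\generatecolor{([c_1,\phi_1],..,[c_m,\phi_m];O(\psi))}\rangle$: whenever the loose semantics reduces $t$ to $\config{\varphi_t,\checkcolor{\mathcal{R}}}$ with $\checkcolor{\mathcal{R}}=\checkcolor{\mathcal{R}_1\parallel..\parallel\mathcal{R}_l}$ (each $\checkcolor{\mathcal{R}_i}$ free of the parallel operator), every parallel-free component satisfies $\checkcolor{\mathcal{R}_i}\le c_1T(\phi_1)+..+c_mT(\phi_m)+O(\psi)$, where $T$ is the uninterpreted complexity symbol of $f$ and the $\phi_k$ are read over the top-level size variable $\emph{u}$. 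In words, the recurrence annotation computed by the type system over-approximates, along every execution path, the recurrence parameter that the loose semantics accumulates.

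Then I would discharge the induction cases. For the base case \textsc{E-Var}, the loose parameter is $0$ and the bound holds trivially. For a non-recursive application \textsc{E-App} $\texttt{g}~e_1..e_m$, the body cost contributed is $[\texttt{size}_\texttt{g}~v_1..v_m/\emph{u}]\psi_\texttt{g}$, which the premise of \textsc{E-App} forces into $O([\texttt{size}~\Gamma(\texttt{args})/\emph{u}]\psi)$, so it is absorbed by the $O(\psi)$ term; the recursive calls buried in the arguments $e_i$ are accounted for because the annotation is split by cost sharing ($\Gamma\vdash\alpha\share\alpha_1|..|\alpha_m$), and the inductive hypotheses for the $e_i$ sum to the shared budget (finitely many $O(\psi)$ summands collapse into one). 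For a recursive call \textsc{E-RecApp} $\texttt{f}~e_1..e_m$, the loose semantics adds exactly one copy of $T(\texttt{size}_\texttt{f}~v_1..v_m)$; the size check $\Gamma\models\bigwedge_i[y_i/\emph{v}]\varphi_i\Rightarrow(\texttt{size}~y_1..y_m\le[\texttt{size}~\Gamma(\texttt{args})/\emph{u}]\phi_k)$ together with monotonicity of $T$ gives $T(\texttt{size}_\texttt{f}~v_1..v_m)\le T(\phi_k)$, while the annotation records the matching increment $\generatecolor{[c_k+1,\phi_k]}$, so the count never exceeds $c_k$. For the branching rules \textsc{T-If} and \textsc{T-Match}, the loose semantics introduces parallel components $\checkcolor{\mathcal{R}_e+\mathcal{R}_j}$, one per branch; the sharing $\Gamma\vdash\alpha\share\alpha_1|\alpha_2$ assigns $\alpha_1$ to the guard/scrutinee and the \emph{common} $\alpha_2$ to every branch, so each path total is bounded by $\alpha_1+\alpha_2$, hence by $\alpha$ (\figref{share}). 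Finally, the subsumption rule \textsc{E-SubType} only relaxes the annotation upward (\figref{subtype}: \textsc{<:-Rec} increases sizes under monotonicity, \textsc{<:-Bound} enlarges the $O(\psi)$ class, and \textsc{<:-Rec-Split}/\textsc{<:-Rec-Combine} redistribute counts at equal sizes), so the invariant is preserved.

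With the invariant established, I would apply it to the body $t$ of $\texttt{fix}~f.\lambda\overline{x}.t$ under the \textsc{T-Abs} setup: every path of its loose recurrence parameter is bounded by $c_1T(\phi_1)+..+c_mT(\phi_m)+O(\psi)$. The inductive claim inside the proof of \theoref{looseIsSound}---that the concrete cost $C_t$ of running the body on any input is dominated, up to a constant, by the recurrence parameter of some single path---then transfers this bound to the concrete cost model. Substituting the genuine complexity $T_f$ for the uninterpreted $T$ and taking the supremum over all inputs of size $\emph{u}$ yields $T_f(\emph{u})\le c_1T_f(\phi_1)+..+c_mT_f(\phi_m)+O(\psi)$, which is exactly the claimed recurrence relation.

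I expect the main obstacle to be the bookkeeping that couples cost sharing with the parallel structure of the loose semantics: one must argue that although the annotation counts $c_k$ bound the recursive calls \emph{along a single path} rather than the total number of syntactic recursive calls, the \textsc{T-If}/\textsc{T-Match} reuse of the same $\alpha_2$ on all branches makes every path's cost collapse to the one shared annotation. A secondary delicacy is the size reasoning in \textsc{E-RecApp}: the loose semantics measures subproblem size through the upper bounds $v_i$ with $\varphi_i\bound v_i$, whereas the type system checks $\phi_k$ against the same refinements $\varphi_i$, so I would need monotonicity of $T$ together with soundness of the refinement over-approximation to conclude $T(\texttt{size}_\texttt{f}~v_1..v_m)\le T(\phi_k)$ uniformly over inputs.
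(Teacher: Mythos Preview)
Your proposal is correct and follows essentially the same approach as the paper: both arguments use the loose semantics of \appref{aSemantics} as the bridge, show by a case analysis on the typing rules (\textsc{E-Var}, \textsc{E-App}, \textsc{E-RecApp}, \textsc{T-If}/\textsc{T-Match}, subsumption) that the recurrence annotation over-approximates every parallel-free component of the loose recurrence parameter, and then invoke the path-wise bound established inside the proof of \theoref{looseIsSound} to pass from the loose parameter to the concrete complexity $T_f$. Your write-up is in fact a bit more explicit than the paper's---you spell out the induction on the typing derivation, the role of cost sharing in application arguments, and that it is the inductive claim \emph{inside} the proof of \theoref{looseIsSound} (not its conclusion) that is being reused---and the two delicacies you flag (path-wise accounting of shared costs, and the monotonicity step linking $\varphi_i\bound v_i$ to $\phi_k$) are exactly the points the paper glosses over.
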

\begin{proof}

\begin{changebar}
	
	The proof of the lemma consists two parts: 1) the recursive calls are tracked correctly, and 2) the bound on the function body is tracked correctly. We assume that the function $\texttt{fix }f.\lambda x_1..\lambda x_n.t$ can be evaluated with the recurrence parameter $\checkcolor{\mathcal{R}_1\parallel..\parallel\mathcal{R}_m}$, and show that $\generatecolor{([c_1,\phi_1],..,[c_m,\phi_m];O(\psi))}$ over-approximates the recurrence parameter in such a way that, along any path $\checkcolor{\mathcal{R}_i}$, the number of $T(\phi)$ such that $\phi\le \phi_i$ is no more than $c_i$, and the sum of non-$T()$ terms in $\checkcolor{\mathcal{R}_i}$ .

	First, note that if $\Gamma\vdash t::\langle \{B~|~\varphi_t\},\generatecolor{\alpha}\rangle$ and $\config{t,0}\mapsto\config{\varphi'_t,  \checkcolor{\mathcal{R}_t}}$, then $\varphi'_t\implies\varphi_t$ because they are built in the same bottom-up way. The difference is that the type $\varphi_t$ inferred by the type system can be arbitrarily  over-approximated by the \textsc{E-SubType} rule. 
	
	For a recursive application term  $t:=f(e_1,\dots,e_n)$,  its type will be a subtype of some type containing a recursive-call cost $\generatecolor{[1,\phi_k]}$ if  		$\bigwedge_{i=1}^m[y_i/\emph{v}]\varphi_{e_i}\!\Rightarrow\! (\texttt{size}~y_1 \ldots y_m\le[\texttt{size}~\Gamma(\texttt{args})/\emph{u}]\phi_k)$. Similarly, the term $t$ will be evaluated with loose cost $T(\texttt{size}(v_1,..,v_n))$ where the $v_i$'s are bounds of $\varphi'_{e_i}$. Consider  any bounds $v_i$ of $\varphi_{e_i}$; they are also the bounds of $\varphi'_{e_i}$ because $\varphi'_{e_i}\implies\varphi_{e_i}$. So the size expression $[\texttt{size}~\Gamma(\texttt{args})/\emph{u}]\phi_k$ is also the bound of $T(\texttt{size}(v_1,..,v_n))$, which means that $\checkcolor{\mathcal{R}}$ contains one $T()$-term match $\generatecolor{[1,\phi_k]}$.

	For a non-recursive application term $t:=g(e_1,\dots,e_n)$, if it it has the annotation $\generatecolor{O(\psi_t)}$, then it satisfies $$\bigwedge_{i=1}^m[y_i/\emph{v}]\varphi_i\!\Rightarrow\! \left([\texttt{size}_{\texttt{g}}~y_1 \ldots y_m/\emph{u}]\psi_{\texttt{g}}\in O([\texttt{size}~\Gamma(\texttt{args})/\emph{u}]\psi_t)\right).$$ For a similar reason as above, the loose cost $\checkcolor{~[(\texttt{size}_g~v_1..~v_n)/\emph{u}]}$ is also in $O([\texttt{size}~\Gamma(\texttt{args})/\emph{u}]\psi_t)$. Therefore, along any given path, the recurrence annotation matches the recurrence parameter.
	
	In the branching rules \textsc{T-Match} and \textsc{T-If}, all branches share the same annotation $\alpha_2$. That is, the annotation $\alpha_2$ is the \emph{upper bound} of the recurrence parameters of all branches.  So the recurrence annotation over-approximates all the paths of the recurrence parameter.
	
	Now we have that with the annotation $\langle\tau,\generatecolor{([c_1,\phi_1],..,[c_m,\phi_m];O(\psi))}\rangle$, the function $f$ can be evaluated with some recurrence parameter $\checkcolor{\mathcal{R}_1\parallel..\parallel\mathcal{R}_m}$ such that for any 
	$\checkcolor{\mathcal{R}_i}:=\sum_{j=1}^lc'_jT(\phi'_j)+O(\psi')$, we have $l<m$, $c_j'<c_j$, $\phi'_j\le\phi_j$ and $\psi'\in O(\psi)$.  Then, according to the \theoref{looseIsSound} the complexity of $f$ must satisfy
	$$T(\emph{u})\le \sum_{j=1}^lc'_jT(\phi'_j)+O(\psi')\le \sum_{j=1}^mc_jT(\phi_j)+O(\psi).$$	
\end{changebar}	
\end{proof}

\begin{theorem}[Soundness of type checking]
	Given a function $\texttt{fix }f.\lambda x_1 \ldots \lambda x_n.t$ and an environment $\Gamma$, if $\Gamma\vdash \texttt{fix }f.\lambda x_1 \ldots \lambda x_n.t:: \langle\tau,\generatecolor{O(\psi)} \rangle$, then the complexity of $f$ is bounded by $\psi$.
\end{theorem}
\begin{proof}This theorem can be proved by combining \lemref{rec} and theorems used in the \tableref{pattern}. For example, if a function satisfies $T(\emph{u})\le T(\floor{\frac{\emph{u}}{2}})+O(1)$, than with Master Theorem\cite{bentley1980general} we have $T(\emph{u})\in O(\log \emph{u})$.
\end{proof}

\section{Synthesis Algorithm}
\label{Se:aSynthesis}

\begin{theorem}[Soundness of synthesizing]
	Given a goal type $\langle\tau,\generatecolor{O(\psi)}\rangle$ and an environment $\Gamma$, if a term $\texttt{fix }f.\lambda x_1..\lambda x_n.t$ is synthesized by \name, then the complexity of $f$ is bounded by $\psi$.
\end{theorem}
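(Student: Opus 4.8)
The plan is to reduce this statement to the already-established Soundness of type checking theorem. The crucial observation, stated just before the theorem, is that \name only enumerates well-typed terms: every step of the synthesis algorithm is designed to mirror exactly one typing rule. So first I would prove, as the main lemma, that the algorithm is \emph{sound by construction}---namely, that whenever \textsc{GenerateI}$(\Gamma,\gamma,d,m)$ (\algref{enumerateI}) returns a term $t\neq\bot$ we have $\Gamma\vdash t\irefine\gamma$, and likewise whenever \textsc{GenerateE}$(\Gamma,\gamma,d)$ (\algref{enumerateE}) returns $t\neq\bot$ we have $\Gamma\vdash t\eguess\gamma$. Given this lemma the theorem follows immediately: the top-level call applies \textsc{T-Abs} to split the goal $\langle\tau,\generatecolor{O(\psi)}\rangle$ into an annotation $\generatecolor{(A)}$ for the body (chosen from \tableref{pattern}) together with a sub-goal for the function body; the body is produced by \textsc{GenerateI}, hence well-typed with annotation $\generatecolor{(A)}$; so by \textsc{T-Abs} the whole term has type $\langle\tau,\generatecolor{O(\psi)}\rangle$, and the Soundness of type checking theorem yields that the complexity of $f$ is bounded by $\psi$.

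I would prove the main lemma by simultaneous induction on the recursion depth of the two mutually recursive procedures, with a case split on which branch of the algorithm produced the returned term. For \textsc{GenerateE}, the subroutine \textsc{CheckE} accepts a term exactly when the premises of one of the E-term rules (\figref{rulesE}) hold: for a variable it checks the refined type against the goal (matching \textsc{E-Var} composed with \textsc{E-SubType}); for an application it (i)~counts recursive calls and rejects if the total exceeds $\sum_i c_i$, (ii)~checks that each recursive sub-problem's size satisfies some $\generatecolor{[c_k,\phi_k]}$ via the same validity query that appears in \textsc{E-RecApp}, (iii)~for auxiliary calls discharges the $O(\cdot)$-membership query that appears in \textsc{E-App}, and (iv)~checks the refined type against the goal. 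Each accepted term therefore admits a derivation in the type system. For \textsc{GenerateI}, the three arms of the algorithm correspond respectively to \textsc{GenerateE} (an E-term reused as an I-term), \textsc{T-Match}, and \textsc{T-If}; in the latter two, the call to \textsc{UpdateCost} realizes a cost-sharing judgment $\Gamma\vdash\alpha\share\alpha_1|\alpha_2$ (\figref{share}), while \textsc{UpdateContext} adds exactly the path conditions and pattern-variable bindings required in the premises of \textsc{T-If} and \textsc{T-Match}. The induction hypothesis supplies the sub-derivations for the recursively synthesized sub-terms, and assembling them with the matching rule gives $\Gamma\vdash t\irefine\gamma$.

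The main obstacle, and the place needing the most care, is verifying that the algorithm's bookkeeping operations are \emph{exactly} the side conditions of the rules rather than mere approximations of them: specifically that (a)~\textsc{UpdateCost} never produces a split disallowed by the sharing rules, (b)~the recursive-call-counting logic correctly matches each call to a recursive-call cost $\generatecolor{[c_k,\phi_k]}$---including the subtle point noted in the text that a single call may satisfy several $\generatecolor{[c_k,\phi_k]}$, so the algorithm must commit to a consistent assignment---and (c)~the SMT-discharged validity queries in \textsc{CheckE} coincide with the $\Gamma\models\cdots$ premises. Here soundness only requires that a \emph{positive} answer from the solver implies genuine validity, so the best-effort, incomplete nature of the non-linear $O$-membership check threatens only completeness, not soundness. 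Finally, I would observe that since \textsc{E-SubType} and the subtyping rules of \figref{subtype} are available, any over-approximation the algorithm performs when matching a term's inferred annotation against the goal is already justified by subtyping, so no argument beyond invoking \textsc{E-SubType} is needed to close those cases.
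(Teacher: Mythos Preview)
Your proposal is correct and follows essentially the same approach as the paper's proof: reduce to the Soundness of type checking theorem by showing that every term produced by the synthesis algorithm is well-typed, arguing that E-terms are well-typed because the enumerate-and-check subroutine only accepts terms passing the premises of the E-term rules, and that I-terms are well-typed by induction since sub-goals are constructed from the premises of \textsc{T-If} and \textsc{T-Match}. The paper's own proof is a two-sentence sketch of exactly this argument; your version supplies considerably more detail (the correspondence of \textsc{UpdateCost} with cost sharing, the matching of recursive calls to $\generatecolor{[c_k,\phi_k]}$, and the observation that only positive SMT answers matter for soundness), but the underlying strategy is identical.
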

	\begin{proof}
		The theorem can be proved by showing that terms produced by the synthesis algorithm are all well-typed. First, E-terms must be well typed because the algorithm do enumerate-and-check. Only checked E-terms can be produced. 
		
		Branching terms  are also well typed because we construct goal types for their sub-terms using the premises of the typing rules. By induction hypothesis, their sub-terms are well-typed and so they are well-typed. 
	
	\end{proof}

\begin{changebar}
	\section{\name with Higher-Order Functions}
	\label{Se:hof}
	Recall that the type system of \name shown in \sectref{typeSystem} does not support higher-order functions.  That is, no argument of a function is allowed to be of an arrow type.  Inferring the resource usages of higher-order functions is challenging for two reasons. First, the resource usages of function arguments $g$ can be unknown, in which case the resource usages of applications of $g$ will also be  unknown. Second, the behavior of function arguments $g$ can be unbounded. Hence, the resource usages of nested applications $f(g(...),...)$ can also be unbounded when the resource usage of $f$ grows along with its arguments. 
	
	For example,  the following program is a higher-order function. It takes as input a function argument $g$ and an integer-list argument $xs$, and constructs a new list as output by applying an auxiliary function \texttt{square} and the function argument $g$ to each element in $xs$. When the resource usage of $g$ is unknown, the resource usage of the application $(g~x)$ is also unknown. Also, suppose we assume that the resource usage of \texttt{square} is linear in its argument. In that case, the resource usage of the application $\texttt{square}~(g~ x)$ is unbounded because the value of its argument $(g~x)$ is unbounded.
	\begin{eqnarray*}
		\texttt{map\_square}=\lambda g.\lambda xs.~\texttt{match}~&xs&~\texttt{with}\\
		\texttt{Nil}~ &\to&~\texttt{Nil}\\
		\texttt{Cons}~x~xt~&\to&~\texttt{Cons}~(\texttt{square}~(g~ x))~(\texttt{map\_square}~g~xt)
	\end{eqnarray*}
	
	Although \name does not support higher-order functions in general, we can extend it to support programs with higher-order functions in practice by introducing four restrictions on target programs. First, we assume that the resource usage of each function argument $g$ is a constant, i.e., $g:\langle \tau,\generatecolor{O(1)} \rangle$. Second, function arguments in recursive calls in the synthesized programs are the same as the top-level function's function arguments. For example, in the body of a higher-order function $\texttt{fix}~f.\lambda g\lambda x\lambda y.t$, all recursive application terms must be of form $f(g,\_,\_)$ where each $\_$ can be any well-typed term. Third, we assume that the behavior of  function arguments does not affect the asymptotic resource usage of  higher-order functions. To satisfy this restriction, we want to avoid nested application terms where the outer functions have non-constant resource usage and the value of arguments of the outer function depends on some function arguments. 
	Finally, function arguments cannot appear in size functions. 
	
	In the rest of this section, we first introduce the extensions to \name's type system and then formally state the restrictions we introduced.
	
	\mypar{Extended Syntax and Types.} The extended syntax of the surface language contains two new rules: 1) an E-term can be a function term, which means that arguments of application terms can be function terms, and 2) a function term can be a lambda term. 
	\[
	\begin{array}{lrclrcl}
	\text{E-term}	 &e&::=& f \qquad
	\text{Function term}	 &f&::=& \lambda x.t
	\end{array}
	\]
	
	The extended type system contains a new kind of arrow type 
	$$\tau::=x_1\!:\!\gamma_1\!\to \ldots \to\!x_n\!:\!\gamma_n\!\to\!y:\tau_y,$$
	which extends standard arrow types by allowing arguments to be annotated types. The idea of the extended arrow types is that arguments can be of function types with annotated resource usage. Recall that with the first restriction, we assume that all recurrence annotations in the higher-order arrow type are \generatecolor{O(1)}. That is, the resource usages of function arguments are always constant.

	\mypar{Restriction on the Synthesis algorithm.}
	With the extended type system, we also modify the synthesis algorithm to prune E-terms that breaks the second or third restriction mentioned above.
	
	To support the second restriction (i.e., that we need to call the same function arguments in recursive calls), the synthesis algorithm first stores the function arguments of the top-level functions. Later, when a recursive call is enumerated, the synthesizer checks whether it calls the same function arguments, and rejects the candidate if it does not.
	
	To support the third restriction (i.e., that the behavior of function arguments should not affect the resource usage), the synthesis algorithm avoids enumerating nested application terms where the resource usage of the outer application depends on the value of an inner application term that calls a function argument.
	
	\mypar{Evaluation.} We evaluated \name on 6 benchmarks with higher-order functions after extending the implementation to support the restrictions presented above. The result (HOF in \tableref{results}) was that \name solved 5 of them.
	
\end{changebar}

\section{Detailed Evaluation}
\label{Se:aEval}
\begin{changebar}
\tableref{results} shows the  evaluation results of \synquid, \resyn, and \tool on benchmarks that can be encoded as \tool benchmarks. In the table,
\begin{itemize}
	\item 
	$ T$ denotes running time; $ B$ denotes the given resource bounds;
	TO denotes a timeout;
	\item benchmarks that cannot be encoded by some tools are shown as -;
	\item Rec. rel. represents the recurrence-relation pattern \tool chose to use;
	\item  C=C-finite sequence. M=Master Theorem. A=Akra-Bazzi method. T=the tree recurrence we introduced in \sectref{extension}. N=non-recursive.
\end{itemize} 
\end{changebar}

\begin{table*}[!t]
	\caption{ Evaluation results of \synquid, \resyn, and \tool.}
	\scriptsize
	\centering
	\begin{tabular}{c|c|c|c|c|c|c|c|}
		&	 \multirow{2}{*}{\bf Problem} & \multicolumn{1}{c|}{\bf \synquid}&  \multicolumn{2}{c|}{\bf \resyn}&   \multicolumn{2}{c|}{\bf \tool} &{Rec.} \\
		&   &     { $ T$(sec)} & { $ B$}  &  { $ T$(sec)} & { $ O(B)$}  &  { $ T$(sec)}&{rel.}\\
\hline List	&	is empty 	&	0.64	&	$0$	&	0.65	&	$1$	&	0.64	&	N	\\
&	is member 	&	0.92	&	$|xs|$	&	0.89	&	$|xs|$	&	0.84	&	C	\\
&	duplicate each element 	&	0.87	&	$|xs|$	&	1.63	&	$|xs|$	&	0.93	&	C	\\
&	replicate 	&	1.02	&	$n$	&	8.31	&	$n$	&	1.30	&	C	\\
&	append two lists 	&	0.94	&	$|xs|$	&	3.70	&	$|xs|$	&	1.95	&	C	\\
&	concatenate list of lists 	&	0.93	&	-	&	-	&	$|xss|$	&	0.97	&	C	\\
&	take firstn elements 	&	1.03	&	$n$	&	7.75	&	$n$	&	1.31	&	C	\\
&	drop firstn elements 	&	0.89	&	$n$	&	40.82	&	$n$	&	11.51	&	C	\\
&	delete value 	&	0.90	&	$|xs|$	&	2.04	&	$|xs|$	&	1.30	&	C	\\
&	zip 	&	0.91	&	$|xs|$	&	2.44	&	$|xs|$	&	1.22	&	C	\\
&	i-th element 	&	0.70	&	$|xs|$	&	1.01	&	$|xs|$	&	0.97	&	C	\\
&	index of element 	&	0.97	&	$|xs|$	&	1.76	&	$|xs|$	&	1.29	&	C	\\
&	insert at end 	&	1.06	&	$|xs|$	&	1.65	&	$|xs|$	&	1.04	&	C	\\
&	reverse 	&	1.10	&	$|xs|$	&	1.49	&	$|xs|$	&	1.09	&	C	\\
\hline Unique	&	insert 	&	1.01	&	$|xs|$	&	2.24	&	$|xs|$	&	2.89	&	C	\\
list	&	delete 	&	0.81	&	$|xs|$	&	1.61	&	$|xs|$	&	2.13	&	C	\\
&	remove duplicates 	&	0.74	&	-	&	-	&	$|xs|^2$	&	3.68	&	C	\\
&	compress	&	2.62	&	$|xs|$	&	10.25	&	$|xs|$	&	7.91	&	C	\\
&	integer range 	&	4.83	&	$size$	&	206.19	&	$size$	&	7.42	&	C	\\
\hline Strictly	&	insert 	&	1.24	&	$|xs|$	&	4.92	&	$|xs|$	&	1.49	&	C	\\
sorted list	&	delete 	&	0.75	&	$|xs|$	&	1.92	&	$|xs|$	&	1.24	&	C	\\
&	intersect 	&	4.45	&	$|xs|+|ys|$	&	7.17	&	$|xs|+|ys|$	&	8.91	&	C	\\
\hline Sorting	&	insert (sorted) 	&	0.90	&	$|xs|$	&	3.92	&	$|xs|$	&	2.16	&	C	\\
&	insertion sort 	&	0.73	&	-	&	-	&	$|xs|^2$	&	6.03	&	C	\\
&	extract minimum 	&	2.45	&	$|xs|$	&	28.66	&	$|xs|$	&	10.09	&	C	\\
&	quick sort	&	6.76	&	-	&	-	&	$|xs|^2$	&	39.29	&	C	\\
&	selection sort 	&	1.84	&	-	&	-	&	$|xs|^2$	&	3.42	&	C	\\
&	balanced split 	&	4.09	&	$|xs|$	&	28.59	&	$|xs|$	&	9.59	&	C	\\
&	merge	&	7.14	&	-	&	-	&	$|xs|+|ys|$	&	37.71	&	C	\\
&	merge sort 	&	6.89	&	-	&	-	&	$|xs|\log |xs|$	&	69.63	&	A	\\
&	partition 	&	5.77	&	$|xs|$	&	40.55	&	$|xs|$	&	10.77	&	C	\\
&	append with pivot 	&	1.33	&	-	&	-	&	$|xs|$	&	1.96	&	C	\\
\hline Tree	&	is member 	&	0.97	&	$2|t|$	&	8.88	&	$|t|$	&	5.47	&	T	\\
&	node count 	&	0.84	&	$2|t|$	&	8.94	&	$|t|$	&	2.94	&	T	\\
&	preorder 	&	1.07	&	$2|t|$	&	7.42	&	$|t|$	&	5.69	&	T	\\
\hline BST	&	is member	&	0.75	&	$2|t|$	&	1.83	&	$|t|$	&	1.54	&	T	\\
&	insert	&	2.14	&	$|t|$	&	12.87	&	$|t|$	&	6.56	&	T	\\
&	delete	&	9.89	&	$2|t|$	&	98.04	&	$|t|$	&	24.20	&	T	\\
&	BST sort	&	4.23	&	$3|t|$	&	54.47	&	$|t|$	&	6.28	&	T	\\
\hline Binary	&	is member 	&	1.45	&	$2|t|$	&	0.97	&	$|t|$	&	2.09	&	T	\\
Heap	&	insert 	&	2.01	&	$|t|$	&	11.89	&	$|t|$	&	4.02	&	T	\\
&	1-element constructor	&	0.90	&	$1$	&	2.11	&	$1$	&	1.29	&	N	\\
&	2-element constructor 	&	1.15	&	$2$	&	2.63	&	$1$	&	1.04	&	N	\\
&	3-element constructor 	&	5.34	&	$3$	&	62.69	&	$1$	&	5.06	&	N	\\
\hline AVL	&	rotate left	&	9.84	&	-	&	-	&	$1$	&	9.28	&	N	\\
&	rotate right	&	28.67	&	-	&	-	&	$1$	&	30.44	&	N	\\
&	balance	&	3.95	&	-	&	-	&	$1$	&	4.22	&	N	\\
&	insert 	&	3.92	&	-	&	-	&	$\log|t|$	&	13.36	&	M	\\
&	delete	&	7.99	&	-	&	-	&	$\log|t|$	&	13.82	&	M	\\
&	extract minimum 	&	8.22	&	-	&	-	&	$\log|t|$	&	12.26	&	M	\\
\hline RBT	&	balance left	&	12.63	&	-	&	-	&	$1$	&	11.15	&	N	\\
&	balance right	&	14.81	&	-	&	-	&	$1$	&	15.70	&	N	\\
&	insert	&	3.00	&	-	&	-	&	$\log|t|$	&	9.68	&	M	\\
\hline User	&	make address book 	&	6.50	&	-	&	-	&	$|adds|$	&	4.89	&	C	\\
&	merge address books 	&	1.50	&	-	&	-	&	$1$	&	1.64	&	N	\\
\hline HOF & map & 0.03 &  |xs| &  0.33  & |xs| &  0.58  & C\\
& zip with function & 0.07 &  |xs| & 0.82 & |xs| & 1.11  & C\\
& foldr & 0.10 &  |xs| & 1.88 & |xs| & 2.57  & C\\
& length using fold & 0.03 &  |xs| & 0.67 & |xs| & 0.56  & N\\
& append using fold & 0.04 &  |xs| & 0.34 & |xs| & 0.72 & N\\
\hline \resyn only	&	triple-1	&	1.01	&	$2|xs|$	&	2.93	&	$|xs|$	&	1.75	&	C	\\
&	triple-2	&	1.01	&	$2|xs|$	&	6.16	&	$|xs|$	&	1.45	&	C	\\
&	concat list of lists	&	1.30	&	$|xss|$	&	9.68	&	$|xss|$	&	1.79	&	C	\\
&	common	&	2.57	&	$|ys|+|zs|$	&	40.77	&	$|ys|+|zs|$	&	57.55	&	C	\\
&	list difference	&	1.36	&	$|ys|+|zs|$	&	419.23	&	$|ys|+|zs|$	&	41.88	&	C	\\
&	insert	&	1.18	&	$\texttt{numgt}(x,xs)$	&	48.82	&	$\texttt{numgt}(x,xs)$	&	3.76	&	C	\\
&	range	&	TO	&	$hi-lo$	&	128.8	&	$hi-lo$	&	7.63	&	C	\\
&	compare	&	1.02	&	$|xs|+|ys|$	&	3.78	&	$|xs|+|ys|$	&	8.32	&	C	\\
\hline\tool only	&	binary search	&	1.53	&	-	&	-	&	$\log |xs|$	&	5.20	&	M	\\
&	product	&	1.09	&	-	&	-	&	$\log x$	&	1.37	&	M	\\
&	binary search'	&	1.64	&	-	&	-	&	$\log |xs|$	&	24.68	&	M	\\
&	product'	&	0.98	&	-	&	-	&	$\log x$	&	14.13	&	M	\\
&	merge sort'	&	TO	&	-	&	-	&	$|xs|\log |xs|$	&	75.73	&	A	\\
	\end{tabular}
	\label{Ta:results}
	\vspace{-2mm}
\end{table*}  

\end{document}